\keywords{Lambda calculus; atomic polymorphism; typability; type inference; intuitionistic logic}
\theoremstyle{plain} 
\newcommand{\Fat}{{\mathbf{F}}_{\mathbf{at}}}
\newcommand{\F}{\mathbf{F}}
\newcommand{\IPC}{\mathbf{IPC}}
\newcommand{\Fatb}{ \mathbf{F}_{\mathbf{at}}^{\circ\bullet}  }
\begin{document}

\title[Typability and Type Inference in Atomic Polymorphism]{Typability and Type Inference in Atomic Polymorphism}

\author[M.~Clarence Protin]{M. Clarence Protin\rsuper{a}}	
\address{Rua Defensores da Liberdade 10 $2^{o}$ Esq.\\
	7050-230 Montemor-o-Novo, Portugal}	
\email{cprotin@sapo.pt}  

\author[G.~Ferreira]{Gilda Ferreira\rsuper{b}}	
\address{Universidade Aberta, 1269-001 Lisboa, Portugal;
	Centro de Matemática, Aplicações Fundamentais e Investigação Operacional~--- Faculdade de Ci\^{e}ncias da Universidade de Lisboa, 1749-016 Lisboa, Portugal}	
\email{gmferreira@fc.ul.pt}  
\thanks{The second author acknowledges the support of FCT~--- Funda\c 
	c\~ao para a Ci\^encia e a Tecnologia under the projects UIDB/04561/2020, UIDB/00408/2020 and UIDP/00408/2020, and she is also grateful to CMAFcIO~--- Centro de Matem\'{a}tica, Aplica\c{c}\~{o}es Fundamentais e Investiga\c{c}\~{a}o Operacional and to LASIGE~--- Computer Science and Engineering Research Centre (Universidade de Lisboa).}	


\begin{abstract}
  \noindent It is well-known that typability, type inhabitation and type inference are undecidable in the Girard-Reynolds polymorphic system~$\F$. It has recently been proven that type inhabitation remains undecidable even in the predicative fragment of system~$\F$ in which all universal instantiations have an atomic witness (system~$\Fat$). In this paper we analyze typability and type inference in Curry style variants of system~$\Fat$ and show that typability is decidable and that there is an algorithm for type inference which is capable of dealing with non-redundancy constraints. 
\end{abstract}

\maketitle


\section*{Introduction}

Type inhabitation, typability and type inference in Girard-Reynolds polymorphic system~$\F$ (also known as the second-order polymorphically typed lambda calculus)~\cite{gir1,Rey,gir} are known to be undecidable~\cite{lob,DR,undec}. 

\emph{Type inhabitation} is the following problem: given a type $A$, is there a term having that type? I.e.\ is there a term $M$ such that $\vdash M:A$? Via the Curry-Howard isomorphism it corresponds to asking  if a formula is provable.  

\emph{Typability} goes in the other way around: \\

\noindent (Typ) Given a term $M$ is there a type $A$ and a type environment $\Gamma$ such that $\Gamma \vdash M:A$?\\

Via the Curry-Howard isomorphism typability corresponds to asking if a construction is a proof of a formula, i.e.\ is there $A$ such that $M$ is the proof of~$A$?

 What about being able to find the actual types (and type environments)  which satisfy the condition above if the answer is affirmative? In this paper by the problem of \emph{type inference} we mean the following: \\
 
 (Ti) Given a term $M$ such that (Typ) admits an affirmative answer find a procedure to generate types $A$ and type environments $\Gamma$ such that $\Gamma \vdash M:A$.\\
 
 Having an effective procedure for (Typ) the formulation above is readily seen to be equivalent to the standard formulation.  \\

It has recently been proven~\cite{CP} that type inhabitation remains undecidable even in a very weak fragment of system~$\F$ (known as system~$\Fat$ or atomic polymorphism~\cite{Ferreira06,FerreiraFerreira13}) where only atomic instantiations are allowed.

What about typability and type inference for~$\Fat$? Is typability decidable?  Can we obtain an algorithm for type inference? In this paper we show that typability is decidable and we give a procedure for type inference.

In analyzing the typing properties of~$\Fat$, the system is considered not in its original Church style formulation (in which case they are trivial) but in Curry style variants where terms do not come with full type annotation.  

The paper is structured as follows. In the next section we recall the system involved in the present study: system~$\Fat$. 

In Section~\ref{Wells} we show that typability is decidable for~$\Fat$ in \emph{Curry style} (according to  the terminology of~\cite{undec}[p.122] for~$\F$).  In Section~\ref{Curry} we show that typability for~$\Fat$ in the \emph{Polymorphic Curry style} (in which terms contain additional polymorphic typing information)  is also decidable, the proof also furnishing a procedure for type inference restricted to \emph{redundant} typings. In Section~\ref{non_redundant} we analyze non-redundant typability, show how we can check for such typings and generate them if they exist. In Section 5 it is remarked that the results of Section 4 yield as a particular case a method for type inference for Polymorphic Curry style $\Fat$. This section also contains various other remarks and considerations about future work.

\section{System $\Fat$}\label{Fat}

The atomic polymorphic system~$\Fat$~\cite{FerreiraFerreira13} is the fragment of Girard/Reynolds system~$\F$~\cite{gir1,Rey} induced by restricting to atomic instances the elimination inference rule for~$\forall$, and the corresponding proof term constructor. System $\Fat$ was originally proposed as a natural and appealing framework for full intuitionistic propositional calculus ($\IPC$)~\cite{Ferreira06,ESF20}. $\Fat$ expresses the connectives of~$\IPC$ in a uniform way avoiding \emph{bad connectives} (according to Girard~\cite{gir}, page 74) and avoiding commuting conversions. This explains the usefulness of the system in proof theoretical studies~\cite{Fer17, Rasiowa}. For related work in the topic see~\cite{Remarks}, where the authors investigate predicative translations of~$\IPC$ into system~$\Fat$ using the equational framework presented in~\cite{TPP19} and giving an elegant semantic explanation (relying on parametricity) of the syntactic results on atomic polymorphism. Originally system~$\F$ and system~$\Fat$  were presented in \emph{Church style} where types are embedded in terms. The questions of typability and type inference, mentioned in the Introduction, are meaningful only when we are in (variants of) \emph{Curry style}, i.e., terms are untyped and the type information is kept apart. Each term has a set (which may be empty) of possible types. In what follows we present $\Fat$ in \emph{Polymorphic Curry style}.

The types/formulas in~$\Fat$ are exactly the ones of system~$\F$:
\[
A,B\,::=\,X\,|\,A\to B \,|\, \forall X. A
\]

The terms in~$\Fat$ are given by
\[
M,N\,::=\,x\,|\,MN \,|\, \lambda x.M \,|\, MX \,|\, \Lambda X. M
\]

\noindent having the following typing/inference rules

{\footnotesize{\noindent\begin{prooftree}
			
			 \AxiomC{}\LeftLabel{\tiny{\textup{$(\textup{VAR})$}}}\RightLabel{$x\in dom(\Gamma)$} \UnaryInfC{$\Gamma\vdash x:\Gamma(x) $}

			 \noLine \UnaryInfC{}\AxiomC{$\Gamma\cup \{x:A\}\vdash M:B$}\LeftLabel{\tiny{\textup{(ABS)}}} \UnaryInfC{$\Gamma\vdash\lambda x.M:A\to B$}  \AxiomC{$\Gamma\vdash M:A\to B$} \AxiomC{$\Gamma\vdash N:A$} \LeftLabel{\tiny{\textup{(APP)}}}\BinaryInfC{$\Gamma\vdash MN:B$}  \noLine
			\BinaryInfC{}\noLine\BinaryInfC{}
\end{prooftree}}}

{\footnotesize{\noindent\begin{prooftree} \AxiomC{$\Gamma \vdash M:A$}\LeftLabel{\textup{(GEN$_X$)}}\RightLabel{$X\notin FTV(\Gamma)$}
			\UnaryInfC{$\Gamma \vdash \Lambda X.M : \forall X.A$}   \AxiomC{$\Gamma \vdash M: \forall X.A$}\LeftLabel{\textup{(INST)}} \UnaryInfC{$\Gamma \vdash MY:A[Y/X]$}  \noLine
			\BinaryInfC{}
\end{prooftree}}}

\noindent where a type environment $\Gamma$ is a finite set $x_1:A_1,\ldots , x_n:A_n$ where $x_i$ are assumption (term) variables and $A_i$ are types.  We denote the set $\{x_1,\ldots,x_n\}$ by~$dom(\Gamma)$, the set $\{A_1,\ldots,A_n\}$ by~$ran(\Gamma)$ and write $\Gamma(x) = A$ whenever $x:A \in \Gamma$.
We denote the free variables in the types in~$ran(\Gamma)$ by~$FTV(\Gamma)$.

Note that, as stressed in the beginning of this section, the difference between system~$\Fat$ and system~$\F$ lies in the restriction of the universal instantiation rule to type variables.

We can drop some of the information in the terms and present $\Fat$ alternatively in \emph{Curry style} (see~\cite{undec}, page 122 in the context of system~$\F$) which differs from the \emph{Polymorphic Curry style} in the terms allowed

\[
M,N\,::=\,x\,|\,MN \,|\, \lambda x.M 
\]

\noindent having the following type inference rules

{\footnotesize{\noindent\begin{prooftree} \AxiomC{}\LeftLabel{\tiny{\textup{$(\textup{VAR})$}}}\RightLabel{$x\in dom(\Gamma)$} \UnaryInfC{$\Gamma\vdash x:\Gamma(x)$}\noLine \UnaryInfC{}\AxiomC{$\Gamma\cup \{x:A\}\vdash M:B$}\LeftLabel{\tiny{\textup{(ABS)}}} \UnaryInfC{$\Gamma\vdash\lambda x.M:A\to B$}  \AxiomC{$\Gamma\vdash M:A\to B$} \AxiomC{$\Gamma\vdash N:A$} \LeftLabel{\tiny{\textup{(APP)}}}\BinaryInfC{$\Gamma\vdash MN:B$}  \noLine
			\BinaryInfC{}\noLine\BinaryInfC{}
\end{prooftree}}}

{\footnotesize{\noindent\begin{prooftree} \AxiomC{$\Gamma \vdash M:A$}\LeftLabel{\textup{(GEN$_X$)}}\RightLabel{$X\notin FTV(\Gamma)$}
			\UnaryInfC{$\Gamma \vdash M : \forall X.A$}   \AxiomC{$\Gamma \vdash M: \forall X.A$}\LeftLabel{\textup{(INST)}} \UnaryInfC{$\Gamma \vdash M:A[Y/X]$}  \noLine
			\BinaryInfC{}
\end{prooftree}}}

 In what follows, we use $A\{Y/X\}$ to denote a type that results from $A$ by replacing some (possibly all) free occurrences of~$X$ by~$Y$. Observe that if we have $\Gamma\vdash M:A$ in \emph{Curry style} derived using the (INST) rule then we have $\Gamma\vdash M:\forall X.A\{X/Y\}$ for some type variables $X$ and $Y$.  Likewise if we have $\Gamma\vdash MY:A$ in \emph{Polymorphic Curry style} then $\Gamma\vdash M:\forall X.A\{X/Y\}$ for some type variable $X$. 

\section{Typability for the Curry system}\label{Wells}

In this Section we show that the typability problem for~$\Fat$, considering the system in \emph{Curry style}, is decidable. The strategy will be to reduce the typability problem to the corresponding problem in the simply typed lambda-calculus $\lambda^\to$ which is known to be decidable. To do this,  although not strictly necessary,  we will employ a theorem which holds more generally for Polymorphic Curry style  terms.

Fix a type variable $\circ$. Given a type $A$ of~$\Fat$ we define $\langle A \rangle$ as the type of~$\lambda^{\rightarrow}$ given
by $\langle X \rangle := \circ$,  $\langle B \rightarrow C \rangle := \langle B \rangle\rightarrow \langle C \rangle$ and $\langle \forall X. A \rangle := \langle A \rangle$.
Let $M$ be a term of~$\Fat$ in \emph{Polymorphic Curry style} (or in \emph{Curry style}). We define $[M]$ as the term of~$\lambda^{\to}$ given by~$[x]:=x$, $[MN] := [M][N]$, $[\lambda x.M] := \lambda x.[M]$,
$[MX] := [M]$, $[\Lambda X.M] := [M]$.   Finally for a type environment $\Gamma$ for~$\Fat$ we define the environment $\langle \Gamma \rangle$ such that $dom (\langle \Gamma \rangle):=dom(\Gamma)$  and $\langle \Gamma \rangle(x) := \langle \Gamma(x) \rangle$.

The following lemma is easily proved by induction on the structure of the type.

\begin{lem}\label{subsX1}
	Let $A$ be an arbitrary type and $X$ and $Y$ be type variables. Then 
        \[\langle \forall X.A \rangle~=~\langle A\{Y/X\} \rangle~=~\langle A \rangle.\]
\end{lem}

\begin{thm}\label{to simply typed} Given a term $M$ in \emph{Polymorphic Curry style}, a type environment $\Gamma$ and a type $A$ in~$\Fat$, \[~ \textup{if}~ \Gamma \vdash_{\Fat} M:A ~\textup{then}~  \langle \Gamma \rangle ~\vdash_{\lambda^{\to}} [M]:~ \langle A \rangle.\]
	
\end{thm}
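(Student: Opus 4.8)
The plan is to argue by induction on the derivation of $\Gamma \vdash_{\Fat} M : A$, showing that the three translations — $[\cdot]$ on terms, $<\cdot>$ on types, and $<\cdot>$ on environments — send each inference rule of $\Fat$ to a valid inference of $\lambda^{\to}$ (or, in the two polymorphism cases, to a trivial step where the conclusion coincides with the premise). Two bookkeeping facts should be recorded before starting: that $dom(<\Gamma>) = dom(\Gamma)$, and that the environment translation commutes with extension, $<\Gamma \cup \{x:A\}>~=~<\Gamma> \cup \{x : <A>\}$, both immediate from the definitions. The base case (VAR) is then direct: if $\Gamma \vdash x : \Gamma(x)$ with $x \in dom(\Gamma)$, then $x \in dom(<\Gamma>)$ and $<\Gamma>(x) = <\Gamma(x)>$, so (VAR) in $\lambda^{\to}$ gives $<\Gamma> \vdash_{\lambda^{\to}} x : <\Gamma(x)>$, which is exactly $<\Gamma> \vdash_{\lambda^{\to}} [x] : <\Gamma(x)>$.

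The two arrow rules are the routine inductive steps. In the (ABS) case, from the premise $\Gamma \cup \{x:A\} \vdash M : B$ the induction hypothesis yields $<\Gamma> \cup \{x : <A>\} \vdash_{\lambda^{\to}} [M] : <B>$; applying (ABS) in $\lambda^{\to}$ gives $<\Gamma> \vdash_{\lambda^{\to}} \lambda x.[M] : <A> \to <B>$, and since $[\lambda x.M] = \lambda x.[M]$ and $<A \to B> = <A> \to <B>$ this is the required judgement. In the (APP) case, the two induction hypotheses give $<\Gamma> \vdash_{\lambda^{\to}} [M] : <A> \to <B>$ and $<\Gamma> \vdash_{\lambda^{\to}} [N] : <A>$, so (APP) in $\lambda^{\to}$ produces $<\Gamma> \vdash_{\lambda^{\to}} [M][N] : <B>$, i.e. $<\Gamma> \vdash_{\lambda^{\to}} [MN] : <B>$.

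The two polymorphism rules are where the encoding genuinely does its work, but thanks to Lemma~\ref{subsX1} nothing remains to be checked. For (GEN$_X$), from $\Gamma \vdash M : A$ the induction hypothesis gives $<\Gamma> \vdash_{\lambda^{\to}} [M] : <A>$; since $[\Lambda X.M] = [M]$ and $<\forall X.A> = <A>$, the target judgement $<\Gamma> \vdash_{\lambda^{\to}} [\Lambda X.M] : <\forall X.A>$ is literally the same one. For (INST), from $\Gamma \vdash M : \forall X.A$ we obtain $<\Gamma> \vdash_{\lambda^{\to}} [M] : <\forall X.A>$; by Lemma~\ref{subsX1} we have $<\forall X.A> = <A[Y/X]>$ (the substitution $A[Y/X]$ being an instance of the $A\{Y/X\}$ notation), and $[MY] = [M]$, so $<\Gamma> \vdash_{\lambda^{\to}} [MY] : <A[Y/X]>$ is again the same judgement. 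This closes the induction.

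I do not expect a real obstacle here: the only point needing any care is the (ABS) case, where one must be sure the environment translation is well behaved under the (possibly shadowing) extension $\Gamma \cup \{x:A\}$, and this is precisely the commutation fact noted at the outset. It is worth remarking that since the argument uses only that $[\cdot]$ erases type abstractions and type applications and that $<\cdot>$ collapses quantifiers, the very same proof applies verbatim to the \emph{Curry style} presentation of $\Fat$.
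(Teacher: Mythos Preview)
Your proof is correct and follows essentially the same approach as the paper's: a case analysis over the typing rules, with Lemma~\ref{subsX1} absorbing both polymorphism cases. The only cosmetic difference is that you induct on the derivation while the paper inducts on the structure of $M$ (and hence inverts the last rule in each case); for (INST) this means the paper starts from $\Gamma \vdash MX:A$ and recovers the premise $\Gamma \vdash M:\forall Y.A\{Y/X\}$, whereas you proceed forward from the premise, but the content is identical.
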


\begin{proof}
	
	The proof is done by induction on the complexity of~$M$. Let $M$ be a variable $x$. Then it is typed
	with (VAR) $\Gamma ~\vdash_{\Fat} x: \Gamma(x)$ and obviously $\langle \Gamma \rangle ~\vdash_{\lambda^{\to}} x:~ \langle \Gamma \rangle(x) = ~\langle \Gamma(x) \rangle$.
	Consider an application $\Gamma \vdash_{\Fat} MN :A$. Then $\Gamma \vdash_{\Fat} M :B \rightarrow A$ and $\Gamma \vdash_{\Fat} N :B$
	for some $B$. By induction  hypothesis $\langle \Gamma \rangle ~\vdash_{\lambda^{\to}} [M] :~\langle B \rangle \rightarrow \langle A \rangle$ and $\langle \Gamma \rangle~ \vdash_{\lambda^{\to}} [N]:~ \langle B \rangle$ hence
	$\langle \Gamma \rangle~ \vdash_{\lambda^{\to}} [M][N]=[MN] :~\langle A \rangle$.
	If we have an abstraction $\Gamma \vdash_{\Fat} \lambda x.M :A \rightarrow B$ then $\Gamma \cup \{x:A\} \vdash_{\Fat} M : B$.
	By induction hypothesis $\langle \Gamma \cup \{x:A\} \rangle~ \vdash_{\lambda^{\to}} [M] :~ \langle B \rangle$ hence
	
	\[ \langle \Gamma \rangle \cup ~ \{x:~\langle A \rangle\} \vdash_{\lambda^{\to}} [M] : ~\langle B \rangle  \]
	
	Using (ABS) we get
	
	\[ \langle \Gamma \rangle ~ \vdash_{\lambda^{\to}} \lambda x.[M] :~ \langle A \rangle~\rightarrow~ \langle B \rangle  \]
	
	Which by definition of~$[\cdot]$ yields 
	
	\[ \langle \Gamma \rangle~  \vdash_{\lambda^{\to}} [\lambda x.M] :~ \langle A \rightarrow B \rangle  \]
	
	In the case of universal application we have $\Gamma \vdash_{\Fat} MX: A$. But then $\Gamma \vdash_{\Fat} M: \forall Y.A\{Y /X\}$.
	By induction hypothesis $\langle \Gamma \rangle ~\vdash_{\lambda^{\to}} [M] : ~\langle \forall Y.A\{Y /X\} \rangle ~\stackrel{\textup{Lemma}~~\ref{subsX1}}{=}$ $\langle A \rangle$ and since $[MX] = [M]$ we get
	\[ \langle \Gamma \rangle ~\vdash_{\lambda^{\to}} [MX]:~ \langle A \rangle. \]
	
	Finally the case of universal abstraction,  $\Gamma \vdash_{\Fat} \Lambda X.M: \forall X. A$ follows from $[\Lambda X.M] = [M]$ and
	$ \langle \forall X.A \rangle~ =~ \langle A \rangle$ and the induction hypothesis for~$\Gamma \vdash_{\Fat} M: A$.
	\end{proof}

\begin{cor} \label{wellsT}
	If $M$ is Curry typable then $M$ is~$\lambda^{\to}$ typable.
\end{cor}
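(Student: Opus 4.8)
The plan is to derive Corollary~\ref{wellsT} as an almost immediate consequence of Theorem~\ref{to simply typed}. First I would unwind the definition of Curry typability: to say that $M$ is Curry typable means exactly that there exist a type environment $\Gamma$ and a type $A$ with $\Gamma \vdash_{\Fat} M : A$, where the derivation uses the \emph{Curry style} rules. Since the Curry style term grammar ($M ::= x \mid MN \mid \lambda x.M$) is a subgrammar of the Polymorphic Curry style grammar, $M$ is also a legitimate Polymorphic Curry style term, and every Curry style derivation is also a Polymorphic Curry style derivation (the only difference being that (INST) and (GEN) do not touch the term). Hence the hypothesis of Theorem~\ref{to simply typed} is met.

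Next I would simply invoke Theorem~\ref{to simply typed} to obtain $<\Gamma> \vdash_{\lambda^{\to}} [M] : <A>$. It then remains to observe that $[M] = M$ for every Curry style term $M$: indeed the clauses $[x] := x$, $[MN] := [M][N]$, $[\lambda x.M] := \lambda x.[M]$ are exactly the identity on the Curry style syntax, and the clauses $[MX]$, $[\Lambda X.M]$ never apply since such terms are absent from the Curry style grammar. Therefore $M$ itself is typable in $\lambda^{\to}$, namely $<\Gamma> \vdash_{\lambda^{\to}} M : <A>$, which is precisely the assertion that $M$ is $\lambda^{\to}$ typable.

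There is essentially no obstacle here: the only thing to be slightly careful about is the coercion between the two Curry variants and the remark that $[\cdot]$ acts as the identity on the restricted syntax, both of which are purely syntactic bookkeeping. (One could even phrase this without mentioning the Polymorphic Curry style at all, by re-running the induction of Theorem~\ref{to simply typed} directly for the Curry style rules, where the (INST) and (GEN) cases follow from Lemma~\ref{subsX1} and $<\forall X.A> = <A>$ together with the fact that the term is unchanged; but appealing to the already-proved theorem is cleaner.)
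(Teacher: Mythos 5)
Your proposal takes exactly the paper's route: the corollary is obtained by applying Theorem~\ref{to simply typed} and observing that $[M]=M$ on the Curry style syntax, and in that respect it matches the published proof, which is equally terse. One intermediate claim, however, is not literally true: a Curry style derivation is \emph{not} a Polymorphic Curry style derivation of the same term, because in the Polymorphic Curry style the rules (INST) and (GEN$_X$) do modify the term (producing $MY$ and $\Lambda X.M$), whereas in the Curry style they leave it unchanged. So the hypothesis of Theorem~\ref{to simply typed}, as stated for the Polymorphic Curry judgment, is not met verbatim by a Curry typing of $M$. The repair is either the one in your own parenthetical --- re-run the induction (on the \emph{derivation}, since in Curry style the last rule is no longer determined by the shape of the term) with the (INST)/(GEN) cases discharged by Lemma~\ref{subsX1} --- or to first decorate the Curry derivation into a Polymorphic Curry derivation of some term $N$ with $[N]=M$ and then apply the theorem to $N$. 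Either way the conclusion stands; the point is only that the bridge between the two styles needs one of these two justifications rather than the claim that the derivations coincide.
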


\begin{proof}
	Immediate by Theorem~\ref{to simply typed}, noticing that being $M$ a term of~$\Fat$ in \emph{Curry style}, $[M]=M$.
\end{proof}

Note that, given a~$\lambda^{\to}$-term, the inverse implication is also valid, not only for the Curry system but for  the Polymorphic Curry  system.

\begin{cor}
	The typability problem is decidable in Curry system~$\Fat$.
\end{cor}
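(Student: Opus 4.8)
The plan is to reduce typability in \emph{Curry style} $\Fat$ to typability in $\lambda^{\to}$, which is classically decidable, by first establishing an equivalence between the two notions. By Corollary~\ref{wellsT}, if a Curry-style term $M$ is typable in $\Fat$ then it is typable in $\lambda^{\to}$ (recalling that for Curry-style terms $[M] = M$). For the converse I would observe that $\lambda^{\to}$ embeds into Curry style $\Fat$: in Curry style the terms are exactly the untyped $\lambda$-terms, and every $\lambda^{\to}$-type is already a (quantifier-free) type of $\Fat$, so any $\lambda^{\to}$-derivation of $\Gamma \vdash M : A$ is verbatim a $\Fat$-derivation. Indeed the rules (VAR), (ABS), (APP) — including their side conditions — are literally the same in both systems, and the additional rules (GEN$_X$) and (INST) simply go unused. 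This is the inverse implication already noted after Corollary~\ref{wellsT}. Hence $M$ is Curry typable in $\Fat$ if and only if $M$ is typable in $\lambda^{\to}$.

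Next I would invoke the well-known decidability of typability for the simply typed $\lambda$-calculus (for instance via the principal type algorithm based on first-order unification, going back to Curry and Hindley): there is an algorithm that, given an untyped $\lambda$-term $M$, decides whether there exist $\Gamma$ and $A$ with $\Gamma \vdash_{\lambda^{\to}} M : A$. Composing this algorithm with the equivalence established above yields the desired decision procedure: given a Curry-style term $M$, run the $\lambda^{\to}$ typability test on $M$ and return its answer.

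I do not expect a genuine obstacle here. The only point requiring a little care is the converse embedding — verifying that no rule of $\Fat$ needs to be simulated and that the premises and side conditions of (VAR), (ABS), (APP) coincide across the two calculi, so that a $\lambda^{\to}$-derivation transports unchanged into $\Fat$. Everything else is a direct appeal to Corollary~\ref{wellsT} and to the classical decidability result for $\lambda^{\to}$.
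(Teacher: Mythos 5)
Your proposal is correct and follows exactly the paper's own argument: Corollary~\ref{wellsT} for the forward direction, the observation that $\lambda^{\to}$-derivations transport verbatim into Curry style $\Fat$ for the converse (this is precisely the ``previous note'' the paper invokes), and the classical decidability of $\lambda^{\to}$ typability. Your spelling out of why the converse embedding works is a slight elaboration of what the paper leaves implicit, but the route is the same.
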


\begin{proof}
	From the previous note and Corollary~\ref{wellsT} we have that Curry typability is equivalent to~$\lambda^{\to}$ typability and the latter is known to be decidable. 	
\end{proof}

\section{Typability for Polymorphic Curry terms}\label{Curry}

In this section we address the more complex question of typability of~$\Fat$ presented in \emph{Polymorphic Curry style}. 

Let $\circ$ and $\bullet$ be two fixed different type variables in~$\Fat$.

We define the subsystem~$\Fat^{\circ\bullet}$ of~$\Fat$ whoses types are given by

\[A, B:= \circ | A \rightarrow B | \forall \bullet. A\]

and terms by

\[M,N := x | MN | \lambda x.M | M\bullet | \Lambda \bullet. M  \]

The type inference rules for~$\Fat^{\circ\bullet}$ are the same as the ones for~$\Fat$ except for the universal inference rules that are given by

{\footnotesize{\noindent\begin{prooftree} \AxiomC{$\Gamma \vdash M:A$}\RightLabel{\textup{(GEN$_\bullet$)}}
			\UnaryInfC{$\Gamma \vdash \Lambda \bullet.M : \forall \bullet.A$}   \AxiomC{$\Gamma \vdash M: \forall \bullet.A$}\RightLabel{\textup{(INST)}} \UnaryInfC{$\Gamma \vdash M\bullet:A$}  \noLine
			\BinaryInfC{}
\end{prooftree}}}

Given a type $A$ of~$\Fat$ we define $[A]$ as the type of~$\Fat^{\circ\bullet}$ given
by $[X] = \circ$, $[A \rightarrow B] = [A]\rightarrow [B]$ and $[\forall X .A] = \forall\bullet.[A]$.
For a type environment $\Gamma$ for~$\Fat$ we define the type environment
$[\Gamma]$ for~$\Fat^{\circ\bullet}$ such that $ dom ([\Gamma])= dom(\Gamma)$ and $[\Gamma](x) = [\Gamma(x)]$.

We define a map from the terms of~$\Fat$ to the terms of~$\Fat^{\circ\bullet}$ given by~$|x|=x$, $|MN|=|M||N|$, $|\lambda x. M|=\lambda x. |M|$, $|MX| = |M|\bullet$ and $|\Lambda X. M| = \Lambda \bullet.|M|$.

Analogously to Lemma~\ref{subsX1} we have the following result:

\begin{lem}\label{subsX}
	Let $A$ be an arbitrary type in~$\Fat$ and $X$ and $Y$ be type variables. Then
	$[A\{Y/X\}]~=~[A]$.
\end{lem}

\subsection{Reduction to~$\Fatb$}

\begin{prop}\label{to double bullet} Let $M$ be a  term in~$\Fat$. If there are a type environment $\Gamma$ and a type $A$ in~$\Fat$
	such that $\Gamma \vdash_{\Fat} M:A$ then $[\Gamma] \vdash_{\Fat^{\circ\bullet}} |M|: [A]$.
\end{prop}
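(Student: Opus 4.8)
The plan is to prove the statement by induction on the structure of the term $M$ (equivalently, on the derivation of $\Gamma\vdash_{\Fat}M:A$), mirroring the proof of Theorem~\ref{to simply typed} but taking $\Fat^{\circ\bullet}$ as the target calculus instead of $\lambda^{\to}$. In each case one inspects the last rule applied (which is determined by the syntactic shape of $M$), invokes the induction hypothesis on the premises, and rebuilds a derivation in $\Fat^{\circ\bullet}$ using the corresponding rule, checking that $|\cdot|$ commutes with the term constructors and that $[\cdot]$ commutes with the relevant type constructors and with the operation of extending a type environment by a binding $x:A$.

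The propositional cases are routine. For $M=x$, rule (VAR) gives $\Gamma\vdash_{\Fat}x:\Gamma(x)$, and since $[\Gamma](x)=[\Gamma(x)]$ and $|x|=x$, rule (VAR) of $\Fat^{\circ\bullet}$ yields the claim. For an application $\Gamma\vdash_{\Fat}NP:A$ coming from $\Gamma\vdash_{\Fat}N:B\to A$ and $\Gamma\vdash_{\Fat}P:B$, the induction hypothesis together with $[B\to A]=[B]\to[A]$ and (APP) gives $[\Gamma]\vdash_{\Fat^{\circ\bullet}}|N||P|:[A]$, and $|NP|=|N||P|$. For an abstraction $\Gamma\vdash_{\Fat}\lambda x.N:A\to B$ coming from $\Gamma\cup\{x:A\}\vdash_{\Fat}N:B$, the induction hypothesis gives $[\Gamma]\cup\{x:[A]\}\vdash_{\Fat^{\circ\bullet}}|N|:[B]$ and (ABS) yields $[\Gamma]\vdash_{\Fat^{\circ\bullet}}\lambda x.|N|:[A]\to[B]=[A\to B]$. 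The two cases that use the specific structure of $\Fat$ are the universal ones. For a universal application $\Gamma\vdash_{\Fat}NX:A$, the observation recorded at the end of Section~\ref{Fat} gives $\Gamma\vdash_{\Fat}N:\forall Y.A\{Y/X\}$ for some type variable $Y$; by induction $[\Gamma]\vdash_{\Fat^{\circ\bullet}}|N|:[\forall Y.A\{Y/X\}]=\forall\bullet.[A\{Y/X\}]$, which by Lemma~\ref{subsX} equals $\forall\bullet.[A]$, so (INST) of $\Fat^{\circ\bullet}$ gives $[\Gamma]\vdash_{\Fat^{\circ\bullet}}|N|\bullet:[A]$ with $|NX|=|N|\bullet$. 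For a universal abstraction $\Gamma\vdash_{\Fat}\Lambda X.N:\forall X.A$ coming from $\Gamma\vdash_{\Fat}N:A$ with $X\notin FTV(\Gamma)$, the induction hypothesis gives $[\Gamma]\vdash_{\Fat^{\circ\bullet}}|N|:[A]$, and since (GEN$_\bullet$) carries no eigenvariable side condition (as $\bullet$ never occurs free in a $\Fat^{\circ\bullet}$ type), it applies directly to give $[\Gamma]\vdash_{\Fat^{\circ\bullet}}\Lambda\bullet.|N|:\forall\bullet.[A]=[\forall X.A]$, which is the claim since $|\Lambda X.N|=\Lambda\bullet.|N|$.

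I do not anticipate a genuine obstacle. The only delicate point is that the collapse of all atomic types to $\circ$ and of all quantifiers to $\forall\bullet$ must be insensitive to the type-variable renaming introduced by the (INST) rule, and this is exactly the content of Lemma~\ref{subsX}; the absence of a side condition on (GEN$_\bullet$) is what lets the universal-abstraction case go through unconditionally, so no bookkeeping about free type variables is needed on the $\Fat^{\circ\bullet}$ side.
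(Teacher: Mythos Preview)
Your proposal is correct and follows essentially the same approach as the paper: induction on the structure of $M$, with the propositional cases handled routinely, Lemma~\ref{subsX} invoked in the universal-application case to collapse $[\forall Y.A\{Y/X\}]$ to $\forall\bullet.[A]$, and the observation that $\bullet$ never occurs free in $[\Gamma]$ (so the side condition of (GEN$_\bullet$) is vacuous) used in the universal-abstraction case. The paper's proof is virtually identical in structure and in the lemmas it appeals to.
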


\begin{proof}
	The proof is done by induction on the structure of~$M$. Let $M$ be a variable $x$. Then it is typed with (VAR) $\Gamma \vdash_{\Fat} x: \Gamma(x)$ and, since $|x| = x$, obviously $[\Gamma] \vdash_{\Fat^{\circ\bullet}} |x|: [\Gamma](x) = [\Gamma(x)]$.

	Consider an application $\Gamma \vdash_{\Fat} MN :A$. Then $\Gamma \vdash_{\Fat} M :B \rightarrow A$ and $\Gamma \vdash_{\Fat} N :B$
	for some type $B$. By induction hypothesis $[\Gamma] \vdash_{\Fatb} |M| :[B\to A]=[B] \rightarrow [A]$ and $[\Gamma] \vdash_{\Fatb} |N| :[B]$ hence, since $|MN| = |M||N|$,
	we have $[\Gamma] \vdash_{\Fat^{\circ\bullet} } |MN| :[A]$.

	If we have an abstraction $\Gamma \vdash_{\Fat} \lambda x.M :A \rightarrow B$ then $\Gamma \cup \{x:A\} \vdash_{\Fat} M : B$.
	By induction hypothesis $[\Gamma \cup \{x:A\}] \vdash_{\Fatb} |M| : [B]$ hence
	
	\[ [\Gamma] \cup \{x:[A]\} \vdash_{\Fat^{\circ\bullet}} |M| : [B]  \]

	\[ [\Gamma]  \vdash_{\Fat^{\circ\bullet} } \lambda x.|M| : [A]\rightarrow [B]  \]

	\[ [\Gamma]  \vdash_{\Fat^{\circ\bullet}  } |\lambda x.M| : [A \rightarrow B]  \]

	In the case of INST we have $\Gamma \vdash_{\Fat} MX: A$. But then $\Gamma \vdash_{\Fat} M: \forall Y.A\{Y /X\}$.
	By induction hypothesis $[\Gamma] \vdash_{\Fat^{\circ\bullet} } |M| : [\forall Y.A\{Y /X\}] =\forall \bullet.[A\{Y/X\}]\stackrel{\textup{Lemma~\ref{subsX}}}{=} \forall\bullet.[A]$ and applying $|M|\bullet =|MX|$ we get
	\[ [\Gamma] \vdash_{\Fat^{\circ\bullet} } |MX|: [A]. \]

	Finally consider the case of (GEN$_X$),  $\Gamma \vdash_{\Fat} \Lambda X.M: A = \forall X. B$.
	We have $\Gamma \vdash_{\Fat} M:B$ with $X$ not belonging to the free type variables in~$\Gamma$. By induction $[\Gamma] \vdash_{\Fatb} |M|:[B]$.
	Then since $\bullet$ does not belong to the free type variables in~$[\Gamma]$, using (GEN$_{\bullet}$) we get  $[\Gamma] \vdash_{\Fatb} \Lambda \bullet. |M|:\forall \bullet.[B] = [\forall X.B]$.
	Since $\Lambda\bullet. |M| = |\Lambda X. M|$, the result follows.
	\end{proof}

Note that we work modulo $\alpha$-equivalence for type variables, in particular we assume the name of the bound variables is always appropriately chosen. 
Since in~$\Fatb$, the type variable $\bullet$ does not occur in the scope
of any quantifier it is clear that if we substitute occurrences of~$\bullet$ in a type $A$ of~$\Fatb$  by any type variables other $\circ$, then we obtain a~$\alpha$-equivalent type $A'$ in
$\Fat$.

\begin{lem} \label{bulletsub}
	Assume that $\Gamma\vdash_{\Fatb} M:A$. Let $M'$ be the term of~$\Fat$ which results from substituting some occurrences of subterms of the form 
	$\Lambda \bullet.N$ in~$M$ by~$\Lambda X. N$ with $X$ some type variable different from $\circ$ and $N\bullet$ in~$M$ by~$NY$ with $Y$ any type variable (different occurrences of the subterms may be replaced using different type variables). Then $\Gamma \vdash_{\Fat} M':A'$ where $A'$ is~$\alpha$-equivalent to~$A$.
\end{lem}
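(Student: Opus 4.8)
The plan is to argue by induction on the structure of $M$, using that in $\Fatb$ the typing rules are syntax-directed: the outermost constructor of $M$ fixes the last rule of any derivation of $\Gamma\vdash_{\Fatb}M:A$, so induction on $M$ coincides with induction on the derivation. Two facts about $\Fatb$ will do the work. Since the grammar of $\Fatb$-types has $\circ$ as its only atomic case, $\bullet$ occurs in a $\Fatb$-type only as a binder, so every $\Fatb$-type, viewed as a $\Fat$-type, has its free type variables among $\{\circ\}$ and every quantifier $\forall\bullet.(-)$ is inert; moreover a trivial induction shows that every type occurring in a $\Fatb$-derivation is again a $\Fatb$-type, hence in particular $FTV(\Gamma)\subseteq\{\circ\}$. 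Consequently: (a) the side condition $X\notin FTV(\Gamma)$ of the $\Fat$-rule (GEN$_X$) — absent from $\Fatb$'s (GEN$_\bullet$) — is met by any $X\neq\circ$; and (b) the type substitution in the $\Fat$-rule (INST) is always vacuous.

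If $M$ is a variable, an abstraction, or an application there is nothing to rewrite at the head, and one simply threads the induction hypotheses through (VAR), (ABS), (APP) respectively; the type produced may only be $\alpha$-equivalent to the original, but since derivations are read modulo $\alpha$-equivalence of types, the two $\alpha$-equivalent occurrences of the argument type that must be matched in the (APP) case are identified, so the rule applies. If $M=\Lambda\bullet.N$ and this occurrence is rewritten to $\Lambda X.(-)$, the induction hypothesis on $\Gamma\vdash_{\Fatb}N:B$ gives $\Gamma\vdash_{\Fat}N':B'$ with $B'$ $\alpha$-equivalent to $B$; applying (GEN$_X$), legitimate by (a) and $X\neq\circ$, yields $\Gamma\vdash_{\Fat}\Lambda X.N':\forall X.B'$, and $\forall X.B'$ is $\alpha$-equivalent to $\forall\bullet.B$ since both quantify vacuously over $\alpha$-equivalent bodies ($X$, resp. $\bullet$, is not free in $B'$, resp. $B$). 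If the occurrence $\Lambda\bullet.N$ is kept, the identical computation with (GEN$_\bullet$) in $\Fat$ works, using $\bullet\neq\circ$. Dually, if $M=N\bullet$, the induction hypothesis gives $\Gamma\vdash_{\Fat}N':C$ with $C$ $\alpha$-equivalent to $\forall\bullet.A$; since the top connective is preserved, $C=\forall Z.A'$ with $A'$ $\alpha$-equivalent to $A$ and $Z\notin FTV(A')$, and (INST) with the chosen witness $Y$ (or with $\bullet$, if the occurrence is kept) gives $\Gamma\vdash_{\Fat}N'Y:A'[Y/Z]=A'$ by (b), with $A'$ $\alpha$-equivalent to $A$ as required.

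The only delicate point — and what I expect to be the (mild) crux — is precisely this clash between the unconstrained $\Fatb$-rules (GEN$_\bullet$), (INST) and their $\Fat$-analogues, which carry a side condition and a substitution; the whole reason for working with $\Fatb$-types is that there $FTV$ never exceeds $\{\circ\}$ and $\forall\bullet$ is a dummy binder, which neutralises both, so that any $X\neq\circ$ and any $Y$ are admissible as the rewriting witnesses, matching the statement. The remainder is routine bookkeeping that the output type is pinned down only up to $\alpha$-equivalence, which the standing ``modulo $\alpha$-equivalence'' convention absorbs, including in the (APP) step where two $\alpha$-variants of a type must be allowed to meet.
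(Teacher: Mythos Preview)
Your proposal is correct and follows essentially the same route as the paper: induction on the structure of $M$, with the same five cases and the same use of $\alpha$-equivalence to reconcile types in (APP) and to see that the quantifier in the (INST)/(GEN) cases is vacuous. The only difference is presentational: you isolate upfront the two facts (a) $FTV(\Gamma)\subseteq\{\circ\}$ so any $X\neq\circ$ satisfies the (GEN$_X$) proviso, and (b) the (INST) substitution is vacuous, whereas the paper leaves these implicit (writing merely ``the proviso is trivially satisfied'' and ``$Y$ is not free in $B$'').
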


\begin{proof}
	The proof is done by induction on the structure of~$M$.
	
	If $M$ is an assumption variable $x$ and $\Gamma \vdash_{\Fatb} x:A$ 
	then $x' = x$ and the result follows immediately since $\Gamma \vdash_{\Fatb} x :A$ implies $\Gamma \vdash_{\Fat} x :A$.

	If $M$ is~$NS$ and $\Gamma \vdash_{\Fat^{\circ\bullet}}NS:A$ then it comes from (APP).
	
	Thus we have $\Gamma \vdash_{\Fat^{\circ\bullet}}N:B\to A$ and $\Gamma \vdash_{\Fat^{\circ\bullet}}S:B$ for some type $B$.
	
	Let us write $M' = (NS)' = N' S'$ where $N'$ and $S'$ are the result of applying restrictions of the substitution defining $M'$.
	Then by induction hypothesis $\Gamma \vdash_{\Fat}N':(B\to A)'$  and $\Gamma \vdash_{\Fat}S': B'$ for some type $B'$ $\alpha$-equivalent to~$B$ and some type $(B\to A)'$ $\alpha$-equivalent to~$B\to A$. Thus $(B\to A)'$ is of the form $C\to D$ where $C$ is~$\alpha$-equivalent to~$B$, and thus $\alpha$- equivalent to~$B'$, and $D$ is~$\alpha$-equivalent to~$A$. Applying (APP) we obtain $\Gamma \vdash_{\Fat} N' S'=(NS)': D$ with $D$ $\alpha$-equivalent to~$A$.

	If $M$ is~$\lambda x.N$ then $\Gamma \vdash_{\Fatb}\lambda x.N:A\to B$ which comes from (ABS) and we have $\Gamma \cup \{x:A\}\vdash_{\Fatb}N:B$.
	Let us write $M' = \lambda x. N'$. Then by induction hypothesis $\Gamma \cup \{x:A\}\vdash_{\Fat}N':B'$ with $B'$ $\alpha$-equivalent to~$B$. But using (ABS) we obtain
	$\Gamma \vdash_{\Fat} \lambda x. N' : A\to B'$. Then the result follows since $M' = \lambda x. N'$ and $A\to B'$ is~$\alpha$-equivalent to~$A\to B$.

	If $M$ is~$N\bullet$ then $\Gamma \vdash_{\Fatb}N\bullet:A$ comes from (INST) and we have
	$\Gamma \vdash_{\Fatb}N:\forall \bullet.A$.
	
	Let $M' = N' X$ (where $X$ may be~$\bullet$). By induction hypothesis 
	$\Gamma \vdash_{\Fat}N':(\forall \bullet.A)'$ where $(\forall \bullet.A)'$ is
	$\alpha$-equivalent to~$\forall \bullet. A$. If we write $(\forall \bullet.A)' = \forall Y . B$ (where $Y$ can be~$\bullet$) then it is clear
	that $B$ must be~$\alpha$-equivalent to~$A$. Applying (INST) for~$X$ yields $\Gamma\vdash_{\Fat} N'X : B$ since $Y$ is not free in~$B$. 
	The result then follows, since $M' = N'X$ and $B$ is~$\alpha$-equivalent to~$A$.

	If $M$ is~$\Lambda\bullet.N$ then $\Gamma \vdash_{\Fatb}\Lambda\bullet .N:\forall \bullet.A$ 
	comes from (GEN$_\bullet$) and we have $\Gamma \vdash_{\Fatb}N:A$.  Let us write $M' = \Lambda X.N'$ where $X$ can be~$\bullet$ but not $\circ$.
	By induction hypothesis $\Gamma \vdash_{\Fat}N':A'$ with $A'$ $\alpha$-equivalent to~$A$. Applying (GEN$_X$) (since the proviso is trivially satisfied) yields
	$\Gamma \vdash_{\Fat}\Lambda X. N': \forall X. A'$. The result then follows since  $\forall X. A'$ is~$\alpha$-equivalent to~$\forall \bullet.A$.
	\end{proof}

\begin{cor} \label{to bullet}Given a Polymorphic Curry-term $M$, if~$|M|$ is typable in~$\Fatb$ then $M$ is typable in~$\Fat$.
\end{cor}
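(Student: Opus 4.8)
The plan is to read the result off from Lemma~\ref{bulletsub}. By hypothesis there are a type environment $\Gamma$ and a type $A$ of $\Fatb$ such that $\Gamma\vdash_{\Fatb}|M|:A$; since every $\Fatb$-type and every $\Fatb$-environment is in particular an $\Fat$-type and an $\Fat$-environment, it suffices to transform this derivation into one of the form $\Gamma\vdash_{\Fat}M:A'$ for some type $A'$ (even a type merely $\alpha$-equivalent to $A$ still witnesses typability of $M$ in $\Fat$).

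First I would note that, working modulo $\alpha$-equivalence, we may assume that $M$ binds neither $\circ$ nor $\bullet$ as a type variable; in particular, every type variable bound by a $\Lambda$ in $M$ is distinct from $\circ$. This is exactly what is needed to respect the side condition of Lemma~\ref{bulletsub}, which does not permit $\circ$ to be reintroduced as a bound type variable.

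The key (purely syntactic) observation is then that $|M|$ is obtained from $M$ by replacing every subterm of the form $NX$ by $|N|\bullet$ and every subterm of the form $\Lambda X.N$ by $\Lambda\bullet.|N|$, leaving all other term constructors untouched. Reading this backwards, $M$ is exactly the term obtained from $|M|$ by the family of subterm replacements that turn each $N\bullet$ back into $NX$ and each $\Lambda\bullet.N$ back into $\Lambda X.N$ --- one for each position of $M$ carrying a type application or a type abstraction, with each such $X$ the original type variable used by $M$ at that position. A routine induction on the structure of $M$ confirms that performing all of these replacements simultaneously on $|M|$ yields exactly $M$; the only case deserving a moment's attention is a nested one such as $\Lambda X.(yX)$, where both the outer abstraction and the inner application have to be rewritten, but Lemma~\ref{bulletsub} explicitly allows replacing several occurrences of the relevant subterms, each with its own type variable.

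Finally I would apply Lemma~\ref{bulletsub} to $\Gamma\vdash_{\Fatb}|M|:A$ with precisely this family of replacements, obtaining $\Gamma\vdash_{\Fat}M:A'$ for some $A'$ $\alpha$-equivalent to $A$, and hence the typability of $M$ in $\Fat$. I do not expect any genuine obstacle here: the corollary is essentially a bookkeeping consequence of Lemma~\ref{bulletsub}, and the only delicate points are the $\alpha$-equivalence conventions around the reserved variables $\circ$ and $\bullet$ and the verification that ``undoing'' the map $|\cdot|$ on $|M|$ reconstructs $M$ itself rather than some other $\alpha$-variant --- both of which are handled by the renaming convention above together with the flexibility (independent choices of type variable for distinct occurrences) already built into Lemma~\ref{bulletsub}.
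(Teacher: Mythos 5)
Your argument is correct and is essentially the paper's own proof: both observe that $M$ is recovered from $|M|$ by exactly the kind of occurrence-by-occurrence substitution that Lemma~\ref{bulletsub} permits, and both dispose of the one problematic case (a subterm $\Lambda\circ.N$ in $M$) by renaming $\circ$ to a fresh type variable, under which typability is invariant. No gap.
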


\begin{proof}
	Let $M$ be a term in~$\Fat$ which does not contain a subterm of the form $\Lambda \circ . N$. Then $M$ results from $|M|$ by a substitution process as in Lemma~\ref{bulletsub} and the result follows.	
	Let $M$ be a term in~$\Fat$ which contains a subterm of the form $\Lambda \circ . N$. Note that the typability of a term in~$\Fat$ does not change if we rename all the type variables appearing in the term. Thus renaming $\circ$ to a fresh type variable the result follows as above.
\end{proof}

Hence combining Proposition~\ref{to double bullet} and Corollary~\ref{to bullet}  we get

\begin{thm} \label{Fattobullet}
	A Polymorphic Curry-term $M$ is typable in~$\Fat$ if and only if~$|M|$ is typable in~$\Fat^{\circ\bullet}$.
\end{thm}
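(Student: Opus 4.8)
The plan is to prove Theorem~\ref{Fattobullet} by combining the two directions already isolated in the excerpt, so the work reduces to checking that the pieces fit together without gaps. For the forward direction, suppose $M$ is typable in $\Fat$, i.e.\ there are $\Gamma$ and $A$ with $\Gamma \vdash_{\Fat} M:A$. Proposition~\ref{to double bullet} then gives $[\Gamma] \vdash_{\Fatb} |M| : [A]$, which immediately witnesses that $|M|$ is typable in $\Fatb$. This half is essentially free once Proposition~\ref{to double bullet} is in hand.

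For the converse, suppose $|M|$ is typable in $\Fatb$, say $\Delta \vdash_{\Fatb} |M| : C$ for some type environment $\Delta$ and type $C$. Corollary~\ref{to bullet} states precisely that if $|M|$ is typable in $\Fatb$ then $M$ is typable in $\Fat$, so we are done --- but the point worth making explicit is \emph{why} $M$ itself (and not merely some term sharing its erasure) is recovered. The key structural observation, used inside Corollary~\ref{to bullet} via Lemma~\ref{bulletsub}, is that $M$ can be obtained from $|M|$ by a substitution of the type annotations back in: every subterm $\Lambda \bullet . N$ in $|M|$ came from a $\Lambda X . N$ in $M$, and every subterm $N\bullet$ came from an $NX$ in $M$, so reversing the erasure map $|\cdot|$ on $M$'s syntax tree is exactly an instance of the substitution process allowed in Lemma~\ref{bulletsub}. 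The only subtlety is the side condition in Lemma~\ref{bulletsub} forbidding reintroduction of $\circ$ in a $\Lambda$-binder; Corollary~\ref{to bullet} handles this by first renaming $\circ$ in $M$ to a fresh variable, which does not affect typability, and this is where one must be slightly careful. So I would state the converse by quoting Corollary~\ref{to bullet} directly.

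Putting the two directions together yields the biconditional, and this is the whole proof: $M$ typable in $\Fat$ $\Rightarrow$ $|M|$ typable in $\Fatb$ by Proposition~\ref{to double bullet}, and $|M|$ typable in $\Fatb$ $\Rightarrow$ $M$ typable in $\Fat$ by Corollary~\ref{to bullet}. I do not expect any genuine obstacle here, since all the real content has already been discharged in Proposition~\ref{to double bullet}, Lemma~\ref{bulletsub}, and Corollary~\ref{to bullet}; the theorem is a packaging statement. If there is any delicate point at all, it is purely bookkeeping: making sure that the environment $[\Gamma]$ produced in the forward direction and the environment recovered in the backward direction are genuinely environments of the target system (finitely many bindings, types in the correct fragment), which is immediate from the definitions of $[\cdot]$ on types and environments. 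Hence the proof will be two short sentences invoking the named results.

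\begin{proof}
	If $M$ is typable in $\Fat$ then there exist a type environment $\Gamma$ and a type $A$ with $\Gamma \vdash_{\Fat} M:A$; by Proposition~\ref{to double bullet} we have $[\Gamma] \vdash_{\Fatb} |M| : [A]$, so $|M|$ is typable in $\Fatb$. Conversely, if $|M|$ is typable in $\Fatb$, then by Corollary~\ref{to bullet} $M$ is typable in $\Fat$.
\end{proof}
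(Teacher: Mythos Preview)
Your proposal is correct and matches the paper's approach exactly: the paper states the theorem as an immediate combination of Proposition~\ref{to double bullet} (forward direction) and Corollary~\ref{to bullet} (converse), with no further argument. Your two-sentence proof and the surrounding discussion faithfully unpack precisely this packaging, including the renaming-of-$\circ$ subtlety already handled inside Corollary~\ref{to bullet}.
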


\subsection{Typability in~$\Fat^{\circ\bullet}$}

In what follows we work in~$\Fatb$.

We define the set $\mathbb{S}$ of \emph{type schemes} by

\[\sigma, \tau := \alpha | \sigma \rightarrow \tau | \forall \bullet. \sigma\]

\noindent where $\alpha$ is a type scheme variable.

A \emph{generalised environment} is a finite set of elements of the form $x:\sigma$, where $x$ is an assumption variable and $\sigma$ is a type scheme. Unlike in the usual environments $x$ may occur more than once.

By a \emph{multisystem} we mean a finite set\footnote{Or equivalently a finite sequence.} $\{e_1,\ldots,e_k\}$ such that each $e_i$ is
a multiequation\footnote{We consider equality between type schemes as usual for types.} $a_1 =\cdots=a_j$ with the $a_i$ in~$\mathbb{S}$. 
Given a generalised environment $D$ we get a multisystem $\tilde{D}$ in the following way:
let $D[x]$ be the set of all elements $\sigma_i$ in~$\mathbb{S}$ such that $x:\sigma_i \in D$. Then we form the multiequation $\sigma_1=\cdots=\sigma_k$ where
the $\sigma_i$ are all the elements in~$D[x]$. The multisystem $\tilde{D}$ is defined by taking the union of such multiequations for all $x$ in the domain of~$D$.

Let $M$ be a term of~$\Fatb$.  Assume  w.l.o.g.\ that $M$ is in \emph{Barendregt form}, that is,
the names of all the bound assumption variables are chosen to be distinct between themselves as well as distinct from the  free assumption variables\footnote{For example the term $M = (\lambda x.x)(\lambda x.x)$ is to be presented in the form  $(\lambda x.x)(\lambda y.y)$ for~$y$ an assumption variable distinct from $x$. The subterms $\lambda x.x$ and $\lambda y.y$ are considered distinct.  }.
A \emph{tagging} $\mathcal{A}$ for~$M$ is a map which assigns subterms and bound assumption variables\footnote{The bound variables are needed in the definition because in terms like $M:=\lambda x. yz$, we want to have tags not just for the subterms $y$, $z$, $yz$ and $M$ but also for~$x$.} of~$M$ to type scheme variables in such a way that distinct subterms/bound variables are assigned distinct type scheme variables. We denote the image of the tagging of a subterm/bound variable $N$ by~$\mathcal{A}(N)$. 
 Here we are using strict (typographical) equality of subterms.

\begin{defi}
	Let $M$ be a term of~$\Fatb$ and $\mathcal{A}$ a tagging for~$M$. The \emph{associated multisystem} of~$M$ is the set $\{e_N\}_{N\in \mathcal{T}}$	where $\mathcal{T}$ is the set of subterms of~$M$ which are not assumption variables and $e_N$ are multiequations defined as follows:
	
	\begin{itemize}
		\item if~$N:=SP$, $\mathcal{A}(N)=\alpha_1$, $\mathcal{A}(S)=\alpha_2$ and $\mathcal{A}(P)=\alpha_3$ then $e_N$ is~$\alpha_2=\alpha_3\to \alpha_1$ 
		\item if~$N:=\lambda x.S$, $\mathcal{A}(N)=\alpha_1$, $\mathcal{A}(x)=\alpha_2$ and $\mathcal{A}(S)=\alpha_3$ then $e_N$ is~$\alpha_1=\alpha_2\to \alpha_3$ 
		\item if~$N:=S\bullet$, $\mathcal{A}(N)=\alpha_1$ and $\mathcal{A}(S)=\alpha_2$ then $e_N$ is~$\alpha_2=\forall \bullet. \alpha_1$ 
		\item if~$N:=\Lambda  \bullet . S$, $\mathcal{A}(N)=\alpha_1$ and $\mathcal{A}(S)=\alpha_2$ then $e_N$ is~$\alpha_1=\forall \bullet. \alpha_2$ 
	\end{itemize}
	
\end{defi}

\begin{lem}\label{lemma iam} Let $M$ be a term in~$\Fatb$ such that $\Gamma\vdash_{\Fatb} M:A$ and let $\mathcal{M}$ be an associated multisystem of~$M$. Then  it is possible to replace each type scheme variable in~$\mathcal{M}$ by a type in~$\Fatb$ in such a way that the multiequations in~$\mathcal{M}$ hold.
\end{lem}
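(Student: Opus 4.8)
The plan is to proceed by induction on the structure of $M$, following the typing derivation of $\Gamma \vdash_{\Fatb} M:A$. The idea is that a derivation of $\Gamma \vdash_{\Fatb} M:A$ assigns, to each subterm $N$ of $M$, a specific type that $N$ receives in that derivation; call it $\theta(N)$. Similarly, each bound assumption variable $x$ of $M$ gets bound at its $\lambda$-abstraction to some specific type $\theta(x)$ in the derivation. The substitution we seek is then precisely $\alpha_N \mapsto \theta(N)$, where $\alpha_N = \mathcal{A}(N)$; it is well-defined on type scheme variables because the tagging $\mathcal{A}$ assigns distinct subterms/bound variables distinct scheme variables, so there is no clash. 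What remains is to check that, under this substitution, every multiequation $e_N$ of $\mathcal{M}$ becomes a true equality of $\Fatb$-types.

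The key steps, carried out in the inductive walk over $M$, are to verify this for each of the four clauses of the associated multisystem. If $N := SP$, the (APP) rule forces $\theta(S) = B \to A$ and $\theta(P) = B$ for some $B$, with $\theta(N) = A$; hence $\theta(\alpha_2) = \theta(\alpha_3) \to \theta(\alpha_1)$, which is exactly $e_N$ after substitution. If $N := \lambda x.S$, the (ABS) rule forces $\theta(N) = \theta(x) \to \theta(S)$, matching $\alpha_1 = \alpha_2 \to \alpha_3$. If $N := S\bullet$, the (INST) rule of $\Fatb$ forces $\theta(S) = \forall \bullet.\,\theta(N)$ (note that in $\Fatb$ the instantiated variable is exactly $\bullet$, so there is no renaming to worry about — this is where the restriction to $\Fatb$ is used), matching $\alpha_2 = \forall\bullet.\alpha_1$. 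If $N := \Lambda\bullet.S$, the (GEN$_\bullet$) rule forces $\theta(N) = \forall\bullet.\,\theta(S)$, matching $\alpha_1 = \forall\bullet.\alpha_2$. One subtlety to record is that a single typographic subterm $N$ occurs exactly once in $M$ when $M$ is in Barendregt form and subterms are taken up to strict typographical equality — but in fact, even if a subterm shape recurred, the derivation assigns a single type to each particular subterm occurrence, and the statement only requires \emph{some} valid replacement, so it suffices to fix the derivation first and read off $\theta$ from it.

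The main obstacle I anticipate is not conceptual but bookkeeping: one must be careful about how a subterm occurrence inside $M$ relates to the sub-derivation that types it, since the (INST) and (GEN$_\bullet$) rules do not change the term in the same way they change the type, and a derivation of $\Gamma \vdash_{\Fatb} M:A$ need not be "syntax-directed" — there could in principle be several consecutive (INST)/(GEN$_\bullet$) steps with no corresponding term constructor. However, in $\Fatb$ the term constructors $M\bullet$ and $\Lambda\bullet.M$ are in bijection with applications of (INST) and (GEN$_\bullet$) respectively (each such rule application changes the term), so the derivation \emph{is} syntax-directed: the last rule used to derive $\Gamma \vdash_{\Fatb} N : \theta(N)$ is determined by the outermost constructor of $N$. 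This makes the induction go through cleanly, and the correspondence between subderivations and subterms is exact. I would state this syntax-directedness explicitly (or appeal to it as folklore) at the start of the proof, then let the four-way case analysis finish the argument.
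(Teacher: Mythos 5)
Your proposal is correct and takes essentially the same approach as the paper's proof: fix the typing derivation, read off from it a type for each subterm and bound assumption variable, substitute that type for the corresponding scheme variable, and verify the four clauses of the associated multisystem case by case. Your explicit observation that $\Fatb$ derivations are syntax-directed (because (INST) and (GEN$_\bullet$) each change the term) is a justification the paper leaves implicit, but the substance of the argument is identical.
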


\begin{proof}
	Suppose that $\Gamma\vdash_{\Fatb} M:A$. Then $\Gamma$ defines a typing for each subterm/bounded variable $N$ of~$M$, which we denote by~$\mathcal{B}(N)$. Let $\mathcal{A}$ be a tagging for~$M$ and $\mathcal{M}$ be its associated  multisystem. Let us prove that replacing the type scheme variables $\alpha=\mathcal{A}(N)$ occurring in~$\mathcal{M}$ by~$\mathcal{B}(N)$ the multiequations (on $\Fatb$ types) hold.
	
	Let $e$ be a multiequation in~$\mathcal{M}$.
	
	\begin{itemize}
		\item[] If~$e$ is of the form $\alpha_1=\alpha_2\to \alpha_3$ then two situations may occur: i) there are terms $S$ and $P$ such that $\alpha_1=\mathcal{A}(S)$, $\alpha_2=\mathcal{A}(P)$ and $\alpha_3=\mathcal{A}(SP)$ or ii) there are terms $x$ and $S$ such that $\alpha_2=\mathcal{A}(x)$, $\alpha_3=\mathcal{A}(S)$ and $\alpha_1=\mathcal{A}(\lambda x.S)$. In case i), after the replacement, we obtain the $\mathcal{B}(S)=\mathcal{B}(P) \to \mathcal{B}(SP)$ and in case ii) we obtain $\mathcal{B}(\lambda x.S)=\mathcal{B}(x) \to \mathcal{B}(S)$. Hence the equations hold.
		
		\item[] If~$e$ is of the form $\alpha_1=\forall \bullet .\alpha_2$ then two situations may occurs: i) there is a term $S$ such that $\alpha_1=\mathcal{A}(S)$ and $\alpha_2=\mathcal{A}(S\bullet)$ or ii) there is a term $S$ such that $\alpha_1=\mathcal{A}(\Lambda \bullet . S)$ and $\alpha_2=\mathcal{A}(S)$. In case i), after the replacement, we obtain the $\mathcal{B}(S)=\forall \bullet .\mathcal{B}(S\bullet)$ and in case ii) we obtain $\mathcal{B}(\Lambda \bullet.S)=\forall \bullet .\mathcal{B}(S)$. Hence the equations hold.  \qedhere
		
	\end{itemize} 	
\end{proof}

In general given a term $M$ and a tagging $\mathcal{A}$,  a \emph{solution} to the associated multisystem $E:= \{e_N\}_{N \in \mathcal{T}}$ is an assignment $f$  of types of~$\Fatb$ to the type scheme variables in~$E$ such that the resulting set of multiequations (now in~$\Fatb$) hold.\\

Given a multiequation $e$ we denote by~$e[i]$ the $i$th element counting from the left. We denote by~$e\setminus e[i]$ the multiequation
which results from eliminating the $i$th term from $e$: if~$e$ is~$\sigma_1=\cdots=\sigma_i=\cdots=\sigma_n$ then $e\setminus e[i]$ is~$
\sigma_1=\cdots=\sigma_{i-1}= \sigma_{i+1} =\cdots=\sigma_n$. If~$e$ is a single equation, $e\setminus e[i]$ consists in eliminating the equation $e$.  Given two multiequations $e_1$ and $e_2$ we denote by~$e_1.e_2$ the result of concatenating
the two multiequations into a single multiequation in the expected way.

We consider the following rules to transform (reduce) a multisystem $F$.\\

\AxiomC{$E$}
\AxiomC{$e$}
\AxiomC{$e'$}\TrinaryInfC{$E \quad e\setminus e[i].e'$}
\DisplayProof
(Join$_{i,j}$), if~$e[i] = e'[j]$ \\ \\

\AxiomC{$E$}
\AxiomC{$e$}
\BinaryInfC{$E \quad e\setminus e[i] \quad  \sigma = \tau \quad \sigma' = \tau'$}
\DisplayProof
(Arr$_{i,j}$), if~$e[i] = \sigma \rightarrow \sigma'$ and $e[j] = \tau \rightarrow \tau'$, $i\neq j$\\ \\

\AxiomC{$E$}
\AxiomC{$e$}
\BinaryInfC{$E \quad e\setminus e[i] \quad  \sigma = \tau$}
\DisplayProof
(Quant$_{i,j}$), if~$e[i] = \forall \bullet .\sigma$ and $e[j] = \forall \bullet. \tau$, $i\neq j$\\ \\

where $E$ are the other multiequations in the multisystem $F$.


\begin{rem}\label{subst}
	Let $E$ be a multisystem whose multiequations hold for a particular substitution of type scheme variables by types in~$\Fatb$. By applying the rules above, the multisystem $E'$ obtained from $E$ is such that its multiequations still hold under the same substitution.
\end{rem}

\begin{defi} Given a multisystem $E$, a \emph{reduction sequence} is a sequence of applications of the above rules.
	A multisystem is called \emph{irreducible} if it is no longer possible to apply a rule to it.
	By a \emph{clash} we mean a multiequation that contains
	type schemes both of the form $\sigma \rightarrow \sigma'$ and $\forall \bullet. \sigma$.
\end{defi}

The following observation is immediate:

\begin{lem}\label{lemma soundness} The rules (considered in both directions) are sound. More precisely,  for any substitution of the type scheme variables for types of~$\Fatb$ the rules are sound for the standard notion of type equality.
\end{lem}

\begin{defi}
	The length of a type scheme is defined inductively as follows:
	\begin{itemize}
		\item[] $l(\alpha)=1$ 
		\item[] $l(\sigma\to \tau)=l(\sigma)+l(\tau)+1$
		\item[] $l(\forall \bullet. \tau)=l(\tau)+1$ 
	\end{itemize}
	
\end{defi}

\begin{lem}\label{finite} All reduction sequences of a multisystem are finite.
\end{lem}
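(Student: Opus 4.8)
The statement is that the reduction relation on multisystems is terminating; since the relation is finitely branching (a finite multisystem admits only finitely many rule instances) this is equivalent to well-foundedness. The first step, used throughout, is the observation that every type scheme occurring as a term in a multisystem reachable from $F$ by a reduction sequence is a subterm of a type scheme occurring in $F$. This is proved by an easy induction on the length of the sequence: (Join) introduces no new type scheme, and (Arr), (Quant) introduce only the direct subterms $\sigma,\sigma',\tau,\tau'$ (respectively $\sigma,\tau$) of the terms $e[i],e[j]$, which are already present. Hence only finitely many type schemes ever occur, and there is a uniform bound $H$ on the value of $l$ on all of them.

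The genuine difficulty is that (Arr) and (Quant) are \emph{not} size-decreasing: they do not delete $e[j]$, and they create fresh copies of its proper subterms, so $\sum\{\,l(a)\mid a\text{ a term occurrence of the multisystem}\,\}$ strictly increases at every (Arr) step. I would get around this by attaching to each multiequation $e$ a \emph{decomposition weight}. Write $\mathrm{arr}(e)$, $\mathrm{qnt}(e)$ for the numbers of arrow-headed and $\forall\bullet$-headed term occurrences of $e$, and $L^{\to}(e)$, $L^{\forall}(e)$ for the sums of $l(a)$ over those occurrences; put
\[
 w(e)\;=\;\max(0,\mathrm{arr}(e)-1)\cdot L^{\to}(e)\;+\;\max(0,\mathrm{qnt}(e)-1)\cdot L^{\forall}(e).
\]
Using $l(\sigma\to\sigma')=l(\sigma)+l(\sigma')+1$, $l(\forall\bullet.\sigma)=l(\sigma)+1$, and the fact that $L^{\to}(e)\ge l(e[i])+l(e[j])$ whenever (Arr) applies to $e$ (and the analogue for (Quant)), a short computation shows that an (Arr) step replaces the number $w(e)$ by the three strictly smaller numbers $w(e\setminus e[i])$, $w(\sigma=\tau)$, $w(\sigma'=\tau')$, and a (Quant) step replaces $w(e)$ by the two strictly smaller numbers $w(e\setminus e[i])$, $w(\sigma=\tau)$. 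Consequently the multiset $\{\,w(e)\mid e\in F\,\}$, ordered by the Dershowitz--Manna multiset ordering over $\mathbb{N}$, strictly decreases at every (Arr) and every (Quant) step.

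What remains --- and what I expect to be the main obstacle --- is the behaviour of (Join), which the weight above does not control: merging two multiequations genuinely aggregates decomposition potential, so $w(e\setminus e[i].e')$ can exceed both $w(e)$ and $w(e')$ and the multiset $\{\,w(e)\mid e\in F\,\}$ need not decrease under (Join). The route I would take is to bound the number of (Arr)- and (Quant)-steps of an arbitrary reduction sequence directly: since all occurring type schemes are subterms of the finitely many in $F$ and have length at most $H$, one should be able to read off a priori a ceiling on the decomposition potential that any sequence of merges applied to $F$ can create, hence a ceiling on the number of decompose steps --- equivalently, to place above $\{\,w(e)\mid e\in F\,\}$, in a lexicographic measure, a coarser invariant that (Join) lowers and that (Arr)/(Quant) cannot raise. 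Once the number of (Arr)- and (Quant)-steps is finite, the number of (Join)-steps is automatically finite too, since each (Join) strictly decreases the number of multiequations while (Arr) and (Quant) increase it by at most two; hence every reduction sequence is finite. Making this quantitative interplay between (Join) and the decompose rules precise is the crux of the argument.
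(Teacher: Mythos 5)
Your proposal does not prove the lemma: you yourself flag the decisive step --- controlling the interaction between (Join), which can inflate your per-multiequation weight $w(e)$, and the decompose rules --- as ``the crux'', and you leave it unresolved. A termination argument that exhibits a well-founded measure for two of the three rules and defers the third to an unspecified ``coarser invariant'' is a plan, not a proof; nothing you have written excludes, say, an infinite alternation of (Join) and (Arr) steps. So there is a genuine gap, and it sits exactly where you say it does.

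The paper's own argument is much more elementary and sidesteps the per-multiequation aggregation that created your difficulty: the measure is the multiset (the paper says ``sequence'') of the lengths of \emph{all} type-scheme occurrences across the whole multisystem. Under (Join) one occurrence simply disappears; under (Arr$_{i,j}$) the occurrence $e[i]$, of length $n_k$, is replaced by the four occurrences $\sigma,\sigma',\tau,\tau'$, and under (Quant$_{i,j}$) by two, each claimed to be of length $<n_k$; the multiset therefore strictly decreases in the multiset ordering on $\mathbb{N}$ at every step, giving the explicit bound $4^{n_1}+\cdots+4^{n_k}$. Your observation that ``$\sum l(a)$ strictly increases at every (Arr) step'' is true but aimed at the wrong measure: it is the multiset, not the sum, that descends. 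That said, your instinct that the fresh copies $\tau,\tau'$ of the \emph{surviving} occurrence $e[j]$ are the delicate point is not baseless: they are only guaranteed to be shorter than $l(e[j])$, not than the removed $n_k=l(e[i])$, so for a general deep multisystem the paper's one-line justification needs supplementing. For the multisystems to which the lemma is actually applied --- associated multisystems of terms, where every complex type scheme has the form $\beta\to\gamma$ or $\forall\bullet.\beta$ with $\beta,\gamma$ type scheme variables, a shape preserved by all three rules --- the four (resp.\ two) new occurrences are always variables of length $1<n_k$ and the multiset argument closes immediately. Combining your own first observation (every occurring scheme is a subterm of an initial one) with this multiset-ordering descent is the missing ingredient; the weight $w(e)$ is not needed.
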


\begin{proof}
	Given a multisystem $E$ we consider the sequence $\sigma_1,\ldots,\sigma_k$ of all the type schemes which appear in all the multiequations of~$E$ with any order.
	Take the length of these type schemes to obtain a sequence of natural numbers $n_1,\ldots,n_k$. We denote this sequence by~$l(E)$.
	Consider the effect of the rules on the sequences $l(E)$.  (Join$_{ij}$) simply removes an element from the sequence.
	(Arr$_{i,j}$) replaces an element $n_k$ with four natural numbers $m,o,p,q< n_k$ and 
	(Quant$_{i,j}$) replaces an element $n_k$ with two natural numbers $m,p < n_k$.
	Thus reductions starting from a sequence $\sigma_1,\ldots,\sigma_k$ terminate in at most $4^{n_1} + \cdots +4^{n_k}$ steps.
	\end{proof}

\begin{rem}
	A multiequation of an irreducible multisystem contains at most one type scheme of the form $\sigma\to\tau$ and 
	at most one type scheme of the form $\forall \bullet. \sigma$.
\end{rem}

\begin{defi}
	A \emph{resolution} of a term $M$ is an irreducible multisystem without clashes obtained by reducing the associated  multisystem  of~$M$.
\end{defi}

It is immediate that a resolution of a term $M$ only contains multiequations of the following three forms (modulo permutations of the order):

\begin{itemize}
	\item[1)] $\alpha_1=\cdots=\alpha_n$
	\item[2)] $\alpha_1=\cdots =\alpha_{n}=\forall \bullet . \sigma$
	\item[3)] $\alpha_1=\cdots =\alpha_{n}=\sigma\to \tau$
\end{itemize}

For 2) and 3) above, we call the multisystems $\alpha_1=\cdots=\alpha_n$ the \emph{body} and $\forall \bullet . \sigma$ and $\sigma\to \tau$ the \emph{head}.
We also refer to multisystems of type 1) as bodies.

\begin{defi} Given a resolution $E$ of a term $M$, its \emph{minimisation} is constructed as follows:
	choose a single type scheme variable from each multiequation of type 1) and from each body of multiequations of types 2) and 3). We say that these chosen
	type scheme variables are \emph{associated} to their respective bodies.
	Then discard all type scheme variables in the bodies of the multiequations except for the previously selected type scheme variables.
	Let $\alpha$ be a scheme variable occurring in a head of a multiequation $e$. If~$\alpha$ occurs in the body of some multiequation, then we replace it in~$e$ by the type scheme variable associated to that body.

\end{defi}

\begin{defi} Let $E$ be the minimisation of a resolution of a term $M$. Its \emph{associated digraph} $G(M)$ is constructed as follows. Its vertices are
	the type scheme variables that appear in~$E$ and we have a directed edge $(\alpha_1,\alpha_2)$  whenever there is a multiequation $e$ in~$E$ such that $\alpha_2$ is in its body and $\alpha_1$ is in its head.
	
\end{defi}

We illustrate the previous concepts with an example. 

Consider the term $\Lambda \bullet.  ((\Lambda \bullet. x_1) \bullet ) (\lambda x_1. ( x_0 (\Lambda \bullet. x_2)))$.
We tag the subterms and bound assumption variables:\\

$\Lambda \bullet.  ((\Lambda \bullet. x_1) \bullet ) (\lambda x_1. ( x_0 (\Lambda \bullet. x_2))): \alpha_0$

$((\Lambda \bullet. x_1) \bullet ) (\lambda x_1 .( x_0 (\Lambda \bullet. x_2))): \alpha_1$

$((\Lambda \bullet. x_1) \bullet ): \alpha_2$

$\lambda x_1. ( x_0 (\Lambda \bullet. x_2)): \alpha_3$

$\Lambda \bullet. x_1: \alpha_4$

$x_0 (\Lambda \bullet. x_2): \alpha_5$

$\Lambda \bullet. x_2: \alpha_6$

$x_1:\alpha_7$

$x_0:\alpha_8$

$x_2:\alpha_9$\\

\noindent The associated multisystem consists of the equations:\\

$\alpha_0 = \forall \bullet. \alpha_1$

$\alpha_2 = \alpha_3 \rightarrow \alpha_1$

$\alpha_4 = \forall \bullet. \alpha_2$

$\alpha_3 = \alpha_7 \rightarrow \alpha_5$

$\alpha_8 = \alpha_6 \rightarrow \alpha_5$

$\alpha_6 = \forall \bullet. \alpha_9$

$\alpha_4 = \forall \bullet. \alpha_7$\\

\noindent Applying the rules we get the irreducible multisystem:\\

$\alpha_0 = \forall \bullet. \alpha_1$

$\alpha_4=\forall \bullet . \alpha_2$

$\alpha_7= \alpha_2 = \alpha_3 \rightarrow \alpha_1$

$\alpha_3 = \alpha_7 \rightarrow \alpha_5$

$\alpha_8 = \alpha_6 \rightarrow \alpha_5$

$\alpha_6 = \forall \bullet. \alpha_9$\\

\noindent Applying minimisation yields:\\

$\alpha_0 = \forall \bullet. \alpha_1$

$\alpha_4=\forall \bullet . \alpha_2$

$\alpha_2 = \alpha_3 \rightarrow \alpha_1$

$\alpha_3 = \alpha_2 \rightarrow \alpha_5$

$\alpha_8 = \alpha_6 \rightarrow \alpha_5$

$\alpha_6 = \forall \bullet. \alpha_9$\\

\noindent There are no clashes. The associated graph has vertices \[V = \{\alpha_0,\alpha_1,\alpha_2,\alpha_3, \alpha_4,\alpha_5,\alpha_6,\alpha_8, \alpha_9\}\] and directed edges \[E= \{ (\alpha_0,\alpha_1),(\alpha_4,\alpha_2),(\alpha_2,\alpha_1),(\alpha_2,\alpha_3),(\alpha_3,\alpha_2),
(\alpha_3,\alpha_5),(\alpha_8,\alpha_6),(\alpha_8,\alpha_5),(\alpha_6,\alpha_9) \}.\]

\begin{lem}\label{resol} A term $M$ is typable in~$\Fat^{\circ\bullet}$ if and only if it has a resolution with an associated digraph not
	containing a cycle.
\end{lem}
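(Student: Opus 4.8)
The plan is to read the associated multisystem $\mathcal{M}$ of $M$ as a first-order unification problem whose solutions are exactly the typings of $M$: the rules $(\mathrm{Join})$, $(\mathrm{Arr})$, $(\mathrm{Quant})$ are the usual unification transformations and acyclicity of $G(M)$ is the occurs check. (Recall that $\Fatb$-types are built from $\circ$ by $\to$ and $\forall\bullet$ only, so $\forall\bullet$ behaves like an ordinary unary function symbol.) For the forward implication, suppose $\Gamma\vdash_{\Fatb}M:A$ and fix a tagging. By Lemma~\ref{lemma iam} there is a substitution $t$ of the type scheme variables by $\Fatb$-types satisfying all multiequations of $\mathcal{M}$. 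By Lemma~\ref{finite} we may reduce $\mathcal{M}$ to an irreducible multisystem $E$, still satisfied by $t$ (Remark~\ref{subst}); in particular $E$ has no clash -- a clash would force a $\Fatb$-type that is at once an arrow and a $\forall\bullet$-type -- so $E$ is a resolution. Passing to its minimisation, and extending $t$ to each discarded body variable by the value of its representative, one sees that whenever $G(M)$ has an edge between a scheme variable occurring in the head of a multiequation and a variable of its body, the $t$-value of the head variable is a \emph{proper} subterm of that of the body variable (heads are arrows or $\forall\bullet$-types, never bare variables). Hence a cycle in $G(M)$ would yield a $\Fatb$-type that is a proper subterm of itself, which is absurd; so $M$ has a resolution with an acyclic digraph.

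For the converse, suppose $M$ has a resolution whose minimisation $E$ has an acyclic digraph $G(M)$. We first build a substitution $t$ satisfying $E$. By irreducibility with respect to $(\mathrm{Join})$, each scheme variable lies in the body of at most one multiequation, so every vertex of $G(M)$ is either a source -- a representative of a body of the first form, or a variable occurring only inside heads -- or lies in the body of a unique multiequation of the second or third form with a genuine head. Enumerate the vertices so that in each multiequation of the second or third form every scheme variable occurring in the head precedes every variable of its body (possible since $G(M)$ has no cycle). Going through the vertices in this order, put $t(\alpha)=\circ$ for each source $\alpha$, and on reaching a body variable of a multiequation with head $H$, set its $t$-value -- and that of the other variables of the same body, if not yet set -- to $H$ with each scheme variable occurring in it replaced by its already-defined $t$-value (well-defined, since those variables come earlier). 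The resulting $t$ satisfies $E$; assigning to each discarded body variable the $t$-value of its representative gives a substitution satisfying the resolution, and, the rules being sound in both directions (Lemma~\ref{lemma soundness}), this substitution also satisfies $\mathcal{M}$.

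It remains to turn $t$ into a derivation. For each subterm $N$ of $M$ let $\Gamma_N$ be the type environment assigning $t(\mathcal{A}(y))$ to every term variable $y$ free in $N$. A routine induction on $N$ -- using that the rules of $\Fatb$ are syntax-directed, that weakening is admissible, and that $(\textup{GEN}_\bullet)$ has no proviso -- shows $\Gamma_N\vdash_{\Fatb}N:t(\mathcal{A}(N))$: the variable case is $(\textup{VAR})$, and each remaining case combines the corresponding multiequation of $\mathcal{M}$ (for instance $\mathcal{A}(S)=\forall\bullet.\mathcal{A}(S\bullet)$ when $N=S\bullet$) with the induction hypothesis and one instance of $(\textup{APP})$, $(\textup{ABS})$, $(\textup{INST})$ or $(\textup{GEN}_\bullet)$. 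Taking $N=M$ gives $\Gamma_M\vdash_{\Fatb}M:t(\mathcal{A}(M))$, so $M$ is typable in $\Fatb$.

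The delicate point, and the expected main obstacle, is the two-way bookkeeping among the three incarnations of the system -- the associated multisystem, the resolution obtained from it, and the minimisation of that resolution -- so that a solution of any one transports faithfully to the others; relatedly, one must verify that a resolution together with acyclicity of $G(M)$ really is equivalent to solvability of the unification problem, i.e.\ that cycles are the only obstruction surviving clash-freeness and irreducibility. Once this correspondence is in place, the topological construction of $t$ and the syntax-directed reconstruction of the derivation are straightforward.
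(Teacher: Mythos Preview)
Your proof is correct and follows essentially the same route as the paper's: both directions use Lemma~\ref{lemma iam}, Remark~\ref{subst} and Lemma~\ref{finite} to pass to an irreducible clash-free multisystem, argue that a cycle would force a type to be a proper subterm of itself, and in the converse build the substitution by a topological traversal of $G(M)$ assigning $\circ$ to the leaves. Your write-up is more explicit than the paper's---notably in spelling out the proper-subterm argument for acyclicity, in invoking Lemma~\ref{lemma soundness} to transport the solution back to the original multisystem, and in giving the syntax-directed induction that reconstructs an actual derivation from the substitution---but the underlying strategy is the same.
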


\begin{proof}

	Let $M$ be typable in~$\Fat^{\circ\bullet}$. By Lemma~\ref{lemma iam} and Remark~\ref{subst} we have that $M$ has a resolution.  Also, it cannot contain a cycle for this
	would imply an impossible condition on the possible types of subterms of~$M$.
	On the other hand if~$M$ has a resolution and its associated digraph $G(M)$ does not contain a cycle
	then we obtain a typing for~$M$  as follows. We tag the terminal vertices (those with no paths leaving them) of~$G(M)$ with $\circ$.
	Since there are no cycles we can tag the rest of the vertices inductively as follows. Assume that the descendants of a vertex $\alpha$ have already been tagged.
	If it has two descendants $\alpha_1$ and $\alpha_2$ tagged with $A$ and $B$ then we tag $\alpha$ with $A\rightarrow B$. If it has a single
	descendant $\beta$ tagged with $A$ then we tag it with $\forall \bullet. A$. It is easy to see by the construction of the
	associated graph that these are the only two possible situations.
	Each vertex of the associated digraph is also associated to a body in the resolution of~$M$. Substitute all type schemes in the body by the new type associated
	to the vertex. With this substitution the multiequations of the resolution continue to hold. Substituting these new values in a~$(x:\sigma)$ of the inferred typing of~$M$ 
	for each $x$ yields an environment rendering $M$ typable in~$\Fatb$.
	\end{proof}

Since it is decidable if a term $M$ has a resolution and whether its associated digraph has a cycle we get that:

\begin{lem} Typability in~$\Fatb$ is decidable.
\end{lem}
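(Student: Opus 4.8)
The plan is to reduce the decision of typability to the two effective checks highlighted just above the lemma — whether a term has a resolution, and whether its associated digraph is acyclic. By Lemma~\ref{resol} these two checks together characterise typability in $\Fatb$, so it suffices to argue that each is effective and that only finitely many resolutions need to be inspected.

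First I would fix, for the given term $M$, a presentation in Barendregt form and a tagging $\mathcal{A}$; both are computable since $M$ has finitely many subterms and bound variables. This yields a single associated multisystem $\mathcal{M}$ in an effective way. Next I would analyse the reduction relation on multisystems. The key structural observation is that this relation is \emph{finitely branching}: a multisystem contains finitely many multiequations, each with finitely many positions, so there are only finitely many ways to instantiate (Join), (Arr) or (Quant). Combining finite branching with Lemma~\ref{finite} (every reduction sequence terminates) and K\"onig's lemma, the tree of all reduction sequences starting from $\mathcal{M}$ is finite; hence the set of irreducible multisystems reachable from $\mathcal{M}$ is finite and computable.

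Then I would run through this finite set. For each irreducible multisystem, checking whether it contains a clash is immediate, and by definition the resolutions of $M$ are exactly the clash-free irreducible multisystems obtained from $\mathcal{M}$; so the (finite) collection of resolutions of $M$ is computable. For each resolution I would compute its minimisation — fixing any choice of type scheme variable associated to each body, which is harmless since the resulting digraph is the same up to renaming of vertices — and then its associated digraph $G(M)$, a finite directed graph; acyclicity of a finite digraph is decidable by a standard search. By Lemma~\ref{resol}, $M$ is typable in $\Fatb$ precisely when at least one of these finitely many digraphs is acyclic, which we have just shown is decidable.

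The only step that is not entirely routine is the finiteness of the reduction tree: one must be explicit that the reduction relation is finitely branching and invoke K\"onig's lemma together with Lemma~\ref{finite}, rather than merely relying on termination of individual reduction sequences. Everything else — constructing $\mathcal{M}$, detecting clashes, performing minimisation, building $G(M)$ and testing acyclicity — is manifestly effective. One could alternatively bound the search directly using the $4^{n_1}+\ldots+4^{n_k}$ bound from the proof of Lemma~\ref{finite}, avoiding K\"onig's lemma altogether, but the finitely-branching argument is cleaner.
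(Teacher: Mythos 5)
Your proposal is correct and follows the paper's own route: the paper derives this lemma directly from Lemma~\ref{resol} together with the observation that having a resolution and acyclicity of the associated digraph are both decidable, which is exactly what you establish. Your explicit justification that the set of reachable irreducible multisystems is finite (finite branching plus termination from Lemma~\ref{finite}, or the explicit bound) fills in a detail the paper leaves implicit, but it is the same argument.
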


Hence, combining the previous lemma with Theorem~\ref{Fattobullet} yields:

\begin{thm}
	Typability for~$\Fat$ in \emph{Polymorphic Curry style} is decidable.
\end{thm}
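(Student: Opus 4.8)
The plan is to assemble the statement from the machinery already in place, so the proof itself will be short; the substantive work lives in the preceding reductions. First I would note that the translation $M \mapsto |M|$ from Polymorphic Curry terms of $\Fat$ to terms of $\Fatb$ is defined by a plain structural recursion ($|x|=x$, $|MN|=|M||N|$, $|\lambda x.M|=\lambda x.|M|$, $|MX|=|M|\bullet$, $|\Lambda X.M|=\Lambda\bullet.|M|$) and is therefore effectively computable. By Theorem~\ref{Fattobullet}, $M$ is typable in $\Fat$ if and only if $|M|$ is typable in $\Fatb$, so the typability problem for Polymorphic Curry $\Fat$ reduces to the typability problem in $\Fatb$.

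Next I would spell out the decision procedure for $\Fatb$-typability. Given $|M|$, put it in Barendregt form and fix a tagging $\mathcal{A}$; this yields a finite, computable associated multisystem. One then applies the reduction rules (Join, Arr, Quant) exhaustively: by Lemma~\ref{finite} every reduction sequence is finite, and since at each stage only finitely many rule instances are available, only finitely many irreducible multisystems are reachable, and they can be enumerated. For each such irreducible multisystem one checks — decidably — whether it contains a clash; a clash-free one is a resolution, and for it one forms the minimisation and the associated digraph $G(|M|)$ and tests for a directed cycle, which is a decidable graph property. By Lemma~\ref{resol}, $|M|$ is typable in $\Fatb$ exactly when some resolution has an acyclic associated digraph, so the above is a genuine decision procedure; composing it with the computation of $|M|$ decides typability of $M$ in $\Fat$.

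I do not expect a real obstacle in this final assembly: the hard parts have already been discharged. The delicate reasoning was in establishing Theorem~\ref{Fattobullet} — especially the direction recovering an $\Fat$-derivation from an $\Fatb$-derivation via Lemma~\ref{bulletsub}, where one must track $\alpha$-equivalence of types through the substitution of $\bullet$ — and in the characterization of $\Fatb$-typability by acyclic digraphs in Lemma~\ref{resol}. The only point to be careful about here is that every ingredient is genuinely effective: computing $|M|$, enumerating the irreducible multisystems (using finiteness of reduction from Lemma~\ref{finite}), detecting clashes, and detecting cycles in $G(|M|)$. Each of these is guaranteed by the preceding development, so the theorem follows by combining Theorem~\ref{Fattobullet} with the decidability of typability in $\Fatb$.
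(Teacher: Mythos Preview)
Your proposal is correct and follows essentially the same approach as the paper: reduce to $\Fatb$ via Theorem~\ref{Fattobullet}, then invoke the decidability of typability in $\Fatb$, which the paper obtains from Lemma~\ref{resol} together with the termination of reduction (Lemma~\ref{finite}). You have spelled out the effectiveness of each step (computing $|M|$, enumerating irreducible multisystems, detecting clashes and cycles) more explicitly than the paper does, but the architecture of the argument is the same.
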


The proof of Lemma~\ref{resol} yields a method of type inference for~$\Fatb$ similar to the case of simply typed calculus~\cite{Remy!appsem}[1.4.3]. The proof of Corollary~\ref{to bullet} allow us to extend such type inference strategy to a method of type inference for~$\Fat$ restricted to redundant typings.

\section{Non-redundant Typability}\label{non_redundant}

A natural question arises: when is a term $M$ of~$\Fat$ typable with non-redundant quantifiers? For example, if a term $M$ contains subterms $N(xX)$ and $N(xY)$, where $x$ is the same free or bound assumption variable (after imposing the Barendregt condition), 
then the type of~$x$ must be of the form $\forall Z.A$ and the types of~$xX$ and $xY$ must be equal. The type of~$xX$ is~$A[X/Z]$ and the type of~$xY$ is~$A[Y/Z]$. Thus $Z$ cannot occur in~$A$ and we see that $\forall Z$ is redundant. 

We have seen by Corollary 2.3 that typable Curry terms coincide with the typable terms of~$\lambda^{\rightarrow}$. 
If a term $M$ has a derivation yielding a typing $\Gamma\vdash M:A$ in which all the quantifiers in~$\Gamma$ and $A$ are redundant then by  lemma 3.11 of~\cite{undec} (which carries over from $\F$ to~$\Fat$)  $M$ has a derivation yielding a typing $\Gamma'\vdash M: A'$ in which all the redundant quantifiers in~$\Gamma$ and $A$ have been eliminated, that is, we obtain canonically a derivation in~$\lambda^{\rightarrow}$.
  Clearly  non-redundancy is very relevant for a meaningful notion of type inference in~$\Fat$. A major advantage of the Polymorphic Curry style presentation of~$\Fat$ is that the second-order information present in the terms allows us to impose fine-grained  non-redundancy conditions in the procedure for type inference.
  
  In this Section we consider $\Fat$ in Polymorphic Curry style.

\begin{defi} Given a  term $M$ in~$\Fat$  typable with a type environment $\Gamma$, we say that it is a \emph{non-redundant typing} if for any subterm of~$M$ 
of the form $NX$ or~$\Lambda X.N$ if we consider the induced typings\footnote{Given a typing $\Gamma \vdash M:A$ and a subterm $N$ of~$M$ there is a uniquely determined type environment $\Gamma'$ and type $B$ such that $\Gamma' \vdash N:B$. We call this the \emph{induced} typing on~$N$ by~$\Gamma \vdash M:A$.} $\Gamma' \vdash N: \forall Y. B$ in the first case or~$\Gamma' \vdash N: B$ in the second, then $Y$ occurs free in~$B$ in the first case or~$X$ occurs free in~$B
$ in the second. \end{defi}

Non-redundancy involves imposing a set of \emph{positive constraints} relative to variable occurrences for the types of certain subterms (which we call \emph{polymorphic subterms}). We can discard these  constraints partially (i.e.\ only impose them on a subset of such subterms) or totally without losing typability.
Typability in general requires only a set of \emph{negative constraints}  imposed by
the proviso of rule (GEN$_{X}$).   In this section we will address the problem of typability in the setting of  positive constraints on a choice of polymorphic subterms which includes  both the (empty) case corresponding to general typability  and the case of  non-redundancy above.  Given a selection $c$ of occurrences of polymorphic subterms of a term $M$ we call a typing of~$M$ which satisfies the positive constraints on these subterms \emph{$c$-non-redundant}.\\

We now define the set $\mathbb{S}'$ of \emph{type schemes} by

\[\sigma, \tau := \alpha  |    \sigma[X /Y]    |\sigma \rightarrow \tau | \forall X. \sigma\]

\noindent where $\alpha$ is a type scheme variable and $X$ and $Y$ are type variables of~$\Fat$. 
Type schemes of the form $\alpha s$ where $s$ is a (possibly empty) string of expressions of the form $[X /Y]$ are called \emph{variants} of~$\alpha$.  If a type scheme is not a variant of an~$\alpha$ then it is called
\emph{complex}.
We also write $\alpha[X_1/Y_1]...[X_n/Y_n]$ as~$\alpha[X_1...X_n/Y_1...Y_n]$ or~$\alpha[\bar{X}/\bar{Y}]$.

We consider that $(\sigma \rightarrow \tau)[X/Y] =  \sigma [X/Y]\rightarrow \tau [X/Y]$ and $(\forall Z. \sigma)[X/Y] = \forall Z. \sigma[X/Y]$ (for~$X$ and $Y$ distinct from $Z$)  and
$(\forall Z. \sigma)[X/Z] = \forall Z. \sigma$.

If  $X,Y,Z,W$ are distinct type variables we consider that $\sigma[X/Y][Z/W] = \sigma [Z/W][X/Y]$. Also we consider that $\sigma[Y/X][W/X] = \sigma[Y/X]$ and $\sigma [X/X] = \sigma$.

We define  a \emph{generalised environment} as previously. A \emph{multisystem} is also defined as previously. 

Given a term  $M$ in~$\Fat$, we define a \emph{tagging}  $\mathcal{A}$ exactly as before. We also have a \emph{variable tagging} $\mathcal{V}$ which assigns to each \emph{occurrence} of a subterm of the form $NX$ or~$\Lambda X.N$ of~$M$ a new type variable $Y$. We denote the value of~$\mathcal{V}$ sometimes by~$X_{NX}$ or~$X_{\Lambda X.N}$ when there is only one occurrence. We also use the notation $NX: \alpha$  for~$\alpha = \mathcal{A}(NX)$. 
The \emph{associated multisystem} is defined as~$F \cup \{e_N\}_{N\in \mathcal{T}}$  where $\mathcal{T}$ is
the set of occurrences of subterms  of~$M$ which are not assumption variables and $F$ is as described below.
When $N$ is an occurrence of an arrow application or an arrow abstraction $e_N$  is defined as in the previous section (note that occurrences of the same subterm of this form give rise to the same multiequations).  In the case of an occurrence of a subterm of the form $NX$ we have that $e_{NX}$ is

\[\alpha_1 =  \alpha_2^W [X/W]\]

\noindent where $\alpha_1 = \mathcal{A}(NX)$, $\alpha_2 = \mathcal{A}(N)$, $W=\mathcal{V}(NX)$ 
and $\alpha_2^W$ is a fresh type scheme variable identified uniquely by~$\alpha_2$ and $W$.
Also  $e_{\Lambda X. N}$ is

\[\alpha_1 = \forall W. \alpha_2[W/X]\]
where $\alpha_1 = \mathcal{A}(\Lambda X. N)$, $\alpha_2 = \mathcal{A}(N)$ and
$W = \mathcal{V}(\Lambda X. N)$.

Finally $F$ consists of equations $\alpha = \forall W. \alpha^W$ for all
occurences of subterms of~$M$ of the form $NX$, where $\alpha = \mathcal{A}(N)$ and $W$ is the value of~$\mathcal{V}$ for each occurrence of the subterm. 

Given a term $M$ and taggings $\mathcal{A}$ and $\mathcal{V}$ we define
a \emph{solution} to the associated multisystem $F \cup \{e_N\}_{N\in \mathcal{T}}$ 
exactly as before.

We consider a binary predicate $Occ(X,\sigma)$ over the sorts of type variables and type scheme variables which will be interpreted as meaning that $X$ must occur free in  the assignment of~$\sigma$ for any solution of the associated multisystem.\\

We associate to a term $M$ a set of formulas $\mathcal{B}(M)$ (called \emph{basic constraints}) in the following way: for all subterms of~$M$ of  the form $\Lambda X. S$ we add

 \[\bigwedge_{\alpha_i}\neg Occ(X,\alpha_i)\]

where $\alpha_i = \mathcal{A}(x)$ for the free assumption variables $x$ in~$S$.  Note that these are the constraints required for typability in general.

Given a selection $c$ for a term $M$ we define a set $\mathcal{K}_c(M) $ which includes $ \mathcal{B}(M) $ and a possibly empty set of positive constraints constructed as follows: for subterms of~$M$ chosen by~$c$ of  the form $\Lambda X. S$ we add

\[ Occ(X, \mathcal{A}(S))\]

\noindent and for subterms of~$M$ chosen by~$c$ of the form  $NX$   we add

\[ Occ(W, \alpha^W) \]

\noindent with $\alpha = \mathcal{A}(N)$ and $W$ is the  value of~$\mathcal{V}$  for that occurrence of the subterm.\\

Note that $Occ(X, \mathcal{A}(S))$  implies that $Occ(W, \mathcal{A}(S)[W/X] )$ where
$W = \mathcal{V}(\Lambda X. S)$.\\

 For other properties of substitution and the occurrence predicate (which will be used in the proofs ahead) see the \emph{occurrence axioms} listed at the beginning of Section 4.1.

Note that if we consider the term $M = \lambda x y. y(I (x X)) (I (x Y) )$ where $I = \lambda z.z$ then by
the definition of tagging we must first ensure that the bound variables are all distinct. Thus the second occurrence of the subterm $I$ would become $I' = \lambda w.w$ where $w$ is an assumption variable distinct from $z$.  We also take occasion to note that it can happen that for a term $N$ with a \emph{redundant} typing $\Gamma\vdash N:A$ we can still have that $A$ does not contain any redundant quantifiers.\\

When we write $\sigma[X/Y]$ we are assuming that $\sigma$ represents a possible type $A$ in which $X$ is free for~$Y$ in~$A$.  We call such values of~$\sigma$ \emph{adequate} for~$[X/Y]$. For a sequence of substitutions $\sigma s$ we define $A$ being adequate to~$s$ inductively.
If $s$ is a single substitution then the definition is as previously. Otherwise let $s = s' [Y/X]$.
Then a type $A$ is adequate for~$s$ if~$As'$ is adequate for~$[Y/X]$.

A partial substitution $\{X/Y\}$ is formally given by  two type variables $X$ and $Y$ and a strictly increasing (possibly empty) sequence of natural numbers $n_1,\ldots,n_k$. The interpretation is as follows. We consider the sequence of all the free occurrences of~$Y$ as they occur from the left. Then the free occurrences of~$Y$ in positions $n_1,\ldots,n_k$ in that sequence are substituted by~$X$. Obviously such a partial substitution can only be applied to types that have at least $n_k$ free occurrences of~$Y$.

We thus can extend the notion of adequacy to partial substitutions $\{Y/X\}$ and to partial substitution sequences $s$.

Note that if~$B = A[X/Y]$
with $A$ adequate for~$[X/Y]$ then $A = B\{Y/X\}$ for some partial substitution $\{Y/X\}$ with $Y$ free for~$X$ in~$B$ for the occurrences of~$X$ which are substituted. If~$A = B s$ then $B = A \mathfrak{s}$ where $\mathfrak{s}$ is some
sequence of partial substitutions involving the variables in~$s$. We can always assume that $s$ represents a \emph{simultaneous} substitution and likewise for~$\mathfrak{s}$.

Since a term has a~$c$-non-redundant typing if and only if it has a~$c$-non-redundant typing in which all the bound variables are distinct and different from the type variables that appear in the term (and if we need to rename a bound variable we can do this in such a way that this situation is preserved) we may assume that the solutions (type values) we are considering are adequate for the substitutions that appear in the associated multisystem.

\begin{lem}\label{lemma ima2} Let $M$ be a term in~$\Fat$ such that $\Gamma\vdash_{\Fat} M:A$ is a~$c$-non-redundant typing and let $\mathcal{M}$ be its 
	associated multisystem for a given $\mathcal{A}$ and $\mathcal{V}$. Then we can replace each type scheme variable in~$\mathcal{M}$ by a type in such a way that the multiequations in~$\mathcal{M}$ hold as well as the corresponding constraints $\mathcal{K}_c(M)$.
\end{lem}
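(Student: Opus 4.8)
The plan is to replay the proof of Lemma~\ref{lemma iam}, enlarging the substitution so that it also covers the fresh scheme variables $\alpha^W$ introduced by the variable tagging $\mathcal{V}$, and then to verify, besides the equations $e_N$, the auxiliary equations of $F$ and the constraints in $\mathcal{K}(M)$. Assume $\Gamma\vdash_{\Fat}M:A$ is a non-redundant typing. As in Lemma~\ref{lemma iam}, this typing induces on each subterm occurrence and each bound assumption variable $N$ of $M$ a uniquely determined typing; I write $\mathcal{B}(N)$ for its type (for a free assumption variable $x$ this is simply the type of $x$ in the ambient environment, so $\mathcal{B}$ is well defined on the objects tagged by $\mathcal{A}$, using that $M$ is in Barendregt form). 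I would set the value of each scheme variable $\mathcal{A}(N)$ to $\mathcal{B}(N)$. It then remains to fix values for the scheme variables $\alpha^W$: given an occurrence of a subterm $NX$ of $M$ with $\alpha=\mathcal{A}(N)$ and $W=\mathcal{V}(NX)$, this occurrence arises by (INST), so the induced typing on $N$ has the form $\Gamma'\vdash N:\forall Z.D_0$; since $W$ is a fresh type variable one may take the representative $\Gamma'\vdash N:\forall W.D$ of this induced typing, and assign $D$ to $\alpha^W$. By the conventions recorded before the statement we may assume that all the types so obtained are adequate for the substitutions occurring in $\mathcal{M}$.

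I would then check that all multiequations of $\mathcal{M}$ hold under this substitution. The equations $e_N$ coming from arrow applications and abstractions are verified exactly as in Lemma~\ref{lemma iam}. For an occurrence $NX$, the $F$-equation $\alpha=\forall W.\alpha^W$ becomes $\mathcal{B}(N)=\forall W.D$, true by construction, while $e_{NX}$, namely $\mathcal{A}(NX)=\alpha^W[X/W]$, becomes $\mathcal{B}(NX)=D[X/W]$, which holds because the (INST) step yields precisely $\Gamma'\vdash NX:D[X/W]$ from $\Gamma'\vdash N:\forall W.D$. For an occurrence $\Lambda X.N$, the equation $\mathcal{A}(\Lambda X.N)=\forall W.\mathcal{A}(N)[W/X]$ becomes $\mathcal{B}(\Lambda X.N)=\forall W.\mathcal{B}(N)[W/X]$; but the (GEN$_X$) step gives $\mathcal{B}(\Lambda X.N)=\forall X.\mathcal{B}(N)$, and, $W$ being fresh, $\forall W.\mathcal{B}(N)[W/X]$ is $\alpha$-equivalent to $\forall X.\mathcal{B}(N)$, hence equal. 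Next I would verify the constraints $\mathcal{K}(M)$. The constraint $Occ(W,\alpha^W)$ attached to an occurrence of $NX$ becomes the requirement that $W$ occur free in $D$, which is exactly the non-redundancy hypothesis for the subterm $NX$. The constraint $Occ(X,\mathcal{A}(S))\wedge\bigwedge_{\alpha_i}\neg Occ(X,\alpha_i)$ attached to a subterm $\Lambda X.S$ becomes $Occ(X,\mathcal{B}(S))\wedge\bigwedge\neg Occ(X,\mathcal{B}(x))$ over the free assumption variables $x$ of $S$: the first conjunct is the non-redundancy hypothesis for $\Lambda X.S$ ($X$ occurs free in the induced type $\mathcal{B}(S)$ of $S$), and each $\neg Occ(X,\mathcal{B}(x))$ holds since the side condition $X\notin FTV(\Gamma')$ of the (GEN$_X$) step forbids $X$ from occurring free in the type of any free assumption variable of $S$.

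I expect the main difficulty to be bookkeeping rather than anything conceptual. One must ensure that the fresh type variables assigned by $\mathcal{V}$ can be coherently identified with the bound type variables appearing in the induced typings produced by the derivation, that each substitution $[X/W]$ or $[W/X]$ encountered is applied only to a type for which it is adequate (no variable capture), and that $\mathcal{B}$ is genuinely well defined on the objects that $\mathcal{A}$ tags. All of this is secured by the normalisation and adequacy conventions established before the statement — we may replace $\Gamma\vdash_{\Fat}M:A$ by a non-redundant typing whose bound type variables are suitably distinct and whose induced type values are adequate for the substitutions occurring in $\mathcal{M}$ — together with the Barendregt form of $M$, which guarantees that occurrences sharing a tag receive the same induced type.
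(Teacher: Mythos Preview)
Your proposal is correct and follows essentially the same approach as the paper: both replay Lemma~\ref{lemma iam}, handle the two new cases by $\alpha$-renaming the bound type variable in the induced typing of $N$ to the fresh $W=\mathcal{V}(NX)$ (respectively $\mathcal{V}(\Lambda X.N)$) so as to define the value of $\alpha^W$, and then read off the constraints from non-redundancy and the (GEN$_X$) proviso. If anything, your write-up is more explicit than the paper's, which treats the $\Lambda X.N$ case as ``similar'' and dispatches the constraints in a single sentence.
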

\begin{proof}
	This is shown in the same way as in Lemma~\ref{lemma iam}. We need only consider two further cases. Consider a~$c$-non-redundant typing $\Gamma$ for~$M$ and let $\Gamma\vdash NX:A$ and $\Gamma\vdash N:B$  for a subterm $NX$ of~$M$. Consider a given occurrence of~$NX$ with $W$ the value of~$\mathcal{V}$ for that occurrence.
	Let $\alpha = \mathcal{A}(NX)$ and $\beta = \mathcal{A}(N)$.  Now $B$ must be of the form $\forall Y. C$ and
	we have by (INST) that $A = C[X/Y]$.  Renaming the bound variable $Y$ to~$W$ (we can assume that there are no variable clashes) we get that $B = \forall W. C[ W/Y]$. Hence $A = C[X/Y] = C[W/Y][X/W]$. Hence we put $\beta^W = C[W/Y]$ and the equation $\alpha = \beta^W[X/W]$ holds. And also $\beta = \forall W. \beta^W$.
	The other case is similar. That the constraints hold follows from the definition of~$c$ --- non-redundancy.
\end{proof}

Given a multiequation $e$ we denote by~$e[\bar{X}/\bar{Y}]$ the multiequation for which
\[e[\bar{X}/\bar{Y}][i] = e[i] [\bar{X}/\bar{Y}]\]
Given a term $M$ with a variable tagging $\mathcal{V}$ we denote by~$\circ$ a type variable
which does not occur among the variables in the substitutions of the associated multisystem of~$M$.
Let $\bar{W}$ be the sequence of all such variables and consider the substitution sequence
$[\circ /\bar{W}]$. Then for any substitution $s$ involving variables of~$\bar{W}$ we have that
$ A s [\circ /\bar{W}] = A  [\circ /\bar{W}] $. Here, when considering $A$ as part of the solution to the associated multisystem, without loss of generality we can assume that the types $A$ never contain $\circ$ either as a free or bound variable.
Given $M$ and $\mathcal{V}$ we use the notation
$\bigcirc = [\circ /\bar{W}]$.

We consider the following rules to transform a multisystem $F$. We have (Arr$_{i,j}$) as in the previous section. In addition, we have the following two rules: \\

\AxiomC{$E$}
\AxiomC{$e$}
\BinaryInfC{$E \enspace e\setminus e[i] \enspace \sigma = \tau[X/Y] \enspace  \tau = \sigma[Y/X] $}
\DisplayProof
(Quant$_{i,j}$), $e[i] = \forall X .\sigma$, $e[j] = \forall Y. \tau$, $i\neq j$\\ \\

\AxiomC{$E$}
\AxiomC{$e$}
\AxiomC{$e'$}\TrinaryInfC{$E \quad (e\setminus e[i])\bigcirc.e'\bigcirc$}
\DisplayProof
(Join$_{i,j}$),  $e[i] = \alpha s$, $e'[j] = \alpha s'$ \\ \\

In the above  $E$ are the other multiequations in the multisystem $F$.
Note that (Arr) and (Quant) are bi-directional.

\begin{defi} Given a multisystem $E$, a \emph{reduction sequence} is a series of applications of the above rules.
	A multisystem is called \emph{irreducible} if it is no longer possible to apply a rule to it.
	
\end{defi}

\begin{rem}\label{reductionFb} We can transform a reduction sequence in the above sense into a reduction sequence in~$\Fatb$ by applying the translation $[\cdot]$ of the previous section where 
	$\alpha s$ in~$\mathbb{S}'$ is translated via $[\cdot]$ to~$\alpha$  in~$\mathbb{S}$.
	
	This guarantees the existence of an irreducible multisystem for typable terms (see Lemma~\ref{finite}).
	Alternatively, we can characterise  irreducible multisystems as those which translate to irreducible multisystems in~$\Fatb$. 
		
\end{rem}

\begin{defi}
	A type scheme variable $\alpha$ in a multisystem is called \emph{terminal} if no variant of it can be expressed in terms of complex type schemes with variants of the other type scheme variables in the multisystem. Terminals are organised into \emph{groups}. Two terminals $\alpha$, $\alpha'$ belong to the same group iff $\alpha s = \alpha' s'$ for some $s,s'$.
\end{defi}

Note that the reduction rules are sound for the standard notion of equality and the terminals are preserved by reduction.

\begin{rem} To obtain a solution to a multisystem it is not enough to give values to the terminal scheme variables.  The type structure of the non-terminals is, however, determined once we have the value of the terminals, as is easily seen by Remark~\ref{reductionFb}.

	What happens is that the free variables of the non-terminals are not uniquely determined by the terminals. For example, consider the multisystem associated to the term $(xX)y$.
	Let $(xX)y:\alpha_1$, $y:\alpha_2$, $xX:\alpha_3$ and $x:\alpha_4$. The associated multisystem contains $ \alpha_3 = \alpha_2 \rightarrow \alpha_1$, $\alpha_3 = \alpha_4^W[ X/ W]$ and $
	\alpha_4 = \forall W. \alpha_4^W$ where $W = \mathcal{V}(xX)$.  Here $\alpha_1$ and $\alpha_2$ are the terminals. Consider the values $ \alpha_1 = (X \rightarrow X) $  and $\alpha_2 = \circ$. Then we have the following distinct solutions:
	\[\alpha_3 = \circ \rightarrow (X \rightarrow X )  \text{ and } \alpha_4 = \forall W. \circ \rightarrow (X \rightarrow W) \]
	\[\alpha_3 = \circ \rightarrow  (X \rightarrow X)  \text{ and } \alpha_4 = \forall W. \circ \rightarrow  (W \rightarrow W) \]
	A  \emph{structure} is an expression built up from a  ``hole'' symbol  and $\forall X.$ and $\rightarrow$.   We say that a type $A$ is \emph{built up} from types $A_1,\ldots, A_n$ if
	$A$ can be obtained by considering a certain structure in which the ``holes'' are filled with types among $A_1,\ldots,A_n$.
\end{rem}

\begin{lem} \label{parsub} Consider the associated multisystem of  a term $M$ with a~$c$-non-redundant typing.
	Given a solution consider the values of the terminals $A_1,\ldots, A_n$.	Then for each non-terminal $\alpha s$ (in particular $\alpha$) there is a partial substitution sequence $\mathfrak{s}_1,\ldots, \mathfrak{s}_m$ such that the value of~$\alpha s$ is built up from $A_1 \mathfrak{s}_1 ,\ldots, A_n \mathfrak{s}_m $, where $m \geq n$ is bounded by the length of~$M$ and the $A_i$ may be repeated in this list.
\end{lem}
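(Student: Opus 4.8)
The plan is to read the skeleton of each non-terminal value off an irreducible multisystem and then follow the substitutions that decorate its multiequations. First I would pass, via Remark~\ref{reductionFb} and the termination argument behind Lemma~\ref{finite}, from the associated multisystem of $M$ to an irreducible one $\mathcal{E}$; since the reduction rules are sound for equality, the given solution still satisfies $\mathcal{E}$, and by definition the terminals, hence the values $A_1,\ldots,A_n$, are unchanged. In $\mathcal{E}$ every non-terminal scheme variable $\alpha$ is tied, through its head equation, to variants of other scheme variables in one of three ways: $\alpha = \sigma\to\tau$ (from an arrow-application or arrow-abstraction of $M$), $\alpha = \forall W.\sigma$ (from an $e_{\Lambda X.N}$ equation), or $\alpha = \forall W.\alpha^W$ together with an equation of the form $\beta = \alpha^W[X/W]$ (from an $e_{NX}$ equation). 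Unfolding this repeatedly -- which terminates because, by Remark~\ref{reductionFb}, the multisystem obtained from $\mathcal{E}$ by the translation $[\cdot]$ into $\Fatb$ has an acyclic digraph (Lemma~\ref{resol}) -- attaches to $\alpha$ a finite \emph{structure} $\mathcal{C}_\alpha$ built from holes, $\forall W.$ and $\to$, whose holes carry terminal scheme variables (possibly with repetitions), together with, for each hole, the list $t_j$ of substitution expressions $[X/W]$ encountered along the path from $\alpha$ down to that hole.

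I would then prove, by induction on $\mathcal{C}_\alpha$ (equivalently, on the depth of the vertex of $\alpha$ in that digraph), that under the given solution the value of $\alpha s$ is exactly $\mathcal{C}_\alpha$ with its $j$-th hole filled by $A_{i_j}\mathfrak{s}_j$, where $\mathfrak{s}_j$ is the partial substitution sequence associated, in the sense recorded just before the lemma (if $A = B\,t$ then $B = A\,\mathfrak{s}$ for a suitable partial $\mathfrak{s}$), to the composite $s\,t_j$. The inductive step is by cases on the head equation of $\alpha$: in the arrow case the value is forced to be an arrow and one recurses on the two component schemes with unchanged prefix; in the case $\alpha = \forall W.\sigma$ the value is forced to be a $\forall W.(\cdots)$ and one recurses on $\sigma$ after appending $[W/X]$, read backwards as the partial substitution $\{X/W\}$, which the adequacy hypotheses on the solution make well defined; in the $e_{NX}$ case one recurses through $\alpha^W$ and then $\alpha$, appending $[X/W]$. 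Here the occurrence and substitution axioms of Section~4.1 are what permit one to commute and simplify the accumulated substitutions -- in particular to absorb the normalisation $\bigcirc = [\circ/\bar{W}]$ built into (Join) -- while the constraints $\mathcal{K}(M)$, by way of non-redundancy, guarantee that the variables bound by the $\forall W.$'s occurring in $\mathcal{C}_\alpha$ genuinely occur, so that these backward partial substitutions are forced rather than merely possible.

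Finally, for the numeric claim, $m\geq n$ is arranged by including every terminal value $A_i$ in the list (with empty substitution for those not used by $\mathcal{C}_\alpha$), and the bound $m\leq$ length of $M$ is obtained by collapsing the holes of $\mathcal{C}_\alpha$ to their distinct fillings: each such filling is $A_i$ composed with a substitution read off a directed path in the $\Fatb$-digraph whose vertices correspond to subterm occurrences of $M$ (distinct subterms carrying distinct tags), and one checks that after $\bigcirc$-normalisation only length-of-$M$-many distinct fillings can arise (the pure-arrow parts of the unfolding, though possibly of exponential tree size, contribute no new fillings, since the accompanying substitution list is unchanged along arrow edges). I expect the genuine obstacle to be the second step: upgrading the Remark preceding the lemma (``the type structure of the non-terminals is determined once we have the value of the terminals'') into the precise statement that the $\Fat$-value of $\alpha$ has exactly the skeleton $\mathcal{C}_\alpha$ modulo the decorating substitutions, and keeping the composed $\mathfrak{s}_j$ inside the class of legitimate partial substitution sequences (this is exactly where non-redundancy is indispensable, since a redundant quantifier would leave the backward partial substitution underdetermined). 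The bound on $m$ is then a bookkeeping argument over the digraph, and the base case -- when the head equation of $\alpha$ already mentions only terminals -- is immediate.
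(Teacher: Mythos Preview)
Your strategy coincides with the paper's: pass to an irreducible multisystem, attach to each non-terminal $\alpha$ the non-cycling digraph $G(\alpha)$ of its descendants (which exists by the translation into $\Fatb$), and argue by induction on that digraph. Where you diverge is in the execution of the inductive step. You propose to track, for each hole of the structure $\mathcal{C}_\alpha$, the list $t_j$ of forward substitutions accumulated along the path from $\alpha$ down to that hole, and then invert them one by one into partial substitutions. The paper avoids this bookkeeping entirely with a single observation: from the head equation $\alpha s = \sigma\to\tau$ (resp.\ $\alpha s = \forall X.\sigma$) one passes to $\alpha = (\sigma\to\tau)\mathfrak{s}$ for a suitable partial $\mathfrak{s}$, and then uses the fact that if a type $A$ is built up from $A_1,\ldots,A_n$ then $A\mathfrak{s}$ is built up from $A_1\mathfrak{s}_1,\ldots,A_n\mathfrak{s}_n$ for restrictions $\mathfrak{s}_i$ of $\mathfrak{s}$. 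This distributes the inverse substitution across the structure in one move, so the two cases (arrow and quantifier) suffice and there is no need to separate the $e_{\Lambda X.N}$ and $e_{NX}$ origins.

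Two further remarks. First, you overstate the role of non-redundancy: the lemma only asserts the \emph{existence} of the $\mathfrak{s}_j$, and for that it is enough that $As = B$ implies $A = B\mathfrak{s}$ for \emph{some} partial $\mathfrak{s}$; uniqueness (which is what non-redundancy would buy) is not needed here. The paper's proof accordingly never invokes $\mathcal{K}(M)$. Second, you spend effort on the bound $m\le\text{length}(M)$; the paper's proof is in fact silent on this point, so your bookkeeping argument over the digraph is additional content rather than something you are missing.
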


\begin{rem}
Since the associated multisystem determines the structure for each non-terminal,  to specify  a solution it is enough to give values for the terminals and a sequence
$\mathfrak{s}_1,\ldots, \mathfrak{s}_n$ for each non-terminal applying to the terminals which occur in its decomposition defined by the associated multisystem.	
\end{rem}

\begin{proof}[Proof of Lemma~\ref{parsub}]
	Consider the irreducible multisystem of~$M$. Each non-terminal $\alpha$ occurs in a unique multiequation containing a complex type scheme, an arrow or quantifier element.  We call the type scheme variables that occur in this complex type scheme  the \emph{descendants} of~$\alpha$.  In this way we can associate each non-terminal with a uniquely determined non-cycling digraph $G(\alpha)$ having terminals as leaves.
	To prove the lemma proceed by induction on the structure of~$G(\alpha)$.

	If $\alpha$ is non-terminal  then its multiequation contains either $\alpha s = \forall X. \sigma$ or~$\alpha s = \sigma \rightarrow \tau$.  Note that if  a type $A$ is built up from types $A_i$ then $A\mathfrak{s}$ is built up from types $A \mathfrak{s}_i$ where $\mathfrak{s}_i$ are restrictions of~$\mathfrak{s}$.
	If $\alpha s = \sigma \rightarrow \tau$ then by hypothesis $\sigma$ and $\tau$ are built from partial substitutions on terminals. But $\alpha = (\sigma \rightarrow \tau) \mathfrak{s}$. Decomposing $\mathfrak{s}$  into partial substitution sequences $\mathfrak{s'}$ and $\mathfrak{s''}$ on 
	$\sigma$ and $\tau$ respectively we get that $\alpha = \sigma \mathfrak{s'} \rightarrow \tau \mathfrak{s''}$. 
	The case of~$\alpha s = \forall X. \sigma$ is analogous.
	\end{proof}

\begin{defi}
	An \emph{occurrence matrix} of a term $M$ is a set of formulas which consists of~$Occ(X,\alpha)$ or~$\neg Occ(X,\alpha)$ for each type variable $X$ and type scheme variable $\alpha$ occurring in the associated multisystem of~$M$.
\end{defi}	

\begin{rem}  
	If a term $M$ has a~$c$-non-redundant typing then it generates a solution to its associated multisystem which in turn generates an occurrence matrix. Notice that given this matrix we can determine all
	the occurrences or non-occurrences of the type variables of the multisystem in substitutions $\alpha s$ of type scheme variables. It is clear that $\mathcal{K}_c(M)$ has to be satisfied, since the typing is~$c$-non-redundant.
	
\end{rem}

\begin{lem} Let $M$ be a term with a~$c$-non-redundant type environment $\Gamma$ and let $\star$ be a type variable not occurring in~$\Gamma$ nor in  the associated multisystem of~$M$.  Then $M$ has a~$c$-non-redundant type environment $\Gamma'$ in which the free type variables either occur in the associated multisystem of~$M$ or are $\star$.
\end{lem}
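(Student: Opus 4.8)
The plan is to argue that nothing is lost by ``collapsing'' the free type variables of $\Gamma$ that are not accounted for by the associated multisystem. Concretely, let $\mathcal{V}$, $\mathcal{A}$ be the taggings used to build the associated multisystem of $M$, and let $\bar{V}$ be the (finite) list of all type variables occurring free somewhere in $\Gamma$ or in the types appearing in the non-redundant derivation of $\Gamma\vdash_{\Fat} M:A$, but \emph{not} occurring among the variables of the associated multisystem and not equal to $\star$. First I would fix, once and for all, a non-redundant derivation $\mathcal{D}$ of $\Gamma\vdash_{\Fat}M:A$; by the induced-typing footnote it assigns to every (occurrence of a) subterm $N$ of $M$ a type $\mathcal{B}(N)$, and to every bound type variable introduced by a $\Lambda X$. a ``local'' type variable, which — after the standard $\alpha$-renaming we are always free to perform — we may assume is precisely the $\mathcal{V}$-value attached to that occurrence. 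So the only type variables that $\mathcal{D}$ ``uses'' apart from the multisystem variables are the members of $\bar V$ together with $\star$; and by hypothesis $\star$ does not occur in $\Gamma$.

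The key step is to apply the simultaneous renaming $\theta := [\,\star/\bar{V}\,]$ to every type in the derivation $\mathcal{D}$, i.e. replace $\mathcal{B}(N)$ by $\mathcal{B}(N)\theta$ and $\Gamma$ by $\Gamma':=\Gamma\theta$. I would then check, by induction on $\mathcal{D}$ exactly as in the proof of Theorem~\ref{to simply typed} and Proposition~\ref{to double bullet}, that the renamed assignment is still a valid $\Fat$-derivation of $\Gamma'\vdash_{\Fat}M:A\theta$: the (VAR), (APP) and (ABS) cases are immediate since renaming commutes with application and arrow-formation; for (INST) one uses that $\theta$ commutes with substitution of a single variable (it is just a renaming), so $C[X/Y]\theta = (C\theta)[X'/Y']$ with $X',Y'$ the $\theta$-images; and for (GEN$_X$) one must observe that the side condition $X\notin FTV(\Gamma)$ is preserved, which holds because the $X$ bound by a generalisation is always (the $\alpha$-renamed) $\mathcal{V}$-value, hence a multisystem variable, hence untouched by $\theta$, and because $\theta$ cannot create a new free occurrence of such an $X$ in $\Gamma\theta$. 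Since $\theta$ does not merge any two multisystem variables, the associated multisystem of $M$ is literally unchanged, and the free type variables of $\Gamma'$ now lie in $\{\text{multisystem variables}\}\cup\{\star\}$ as required.

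Finally I would verify that $\Gamma'\vdash_{\Fat}M:A\theta$ is still a \emph{non-redundant} typing. This is where the argument is slightly more than routine: for each subterm $NX$ (resp. $\Lambda X.N$) of $M$, the non-redundancy hypothesis on $\mathcal{D}$ says that the bound variable $Y$ (resp. $X$) occurs free in the induced type $B$; I must show $Y$ (or $X$) still occurs free in $B\theta$. But $Y$ (resp. $X$) is a $\mathcal{V}$-value — a multisystem variable — hence not in $\bar V$, hence fixed by $\theta$; and since $\theta$ is a renaming it neither deletes nor captures free occurrences of a variable it fixes, so the free occurrence of $Y$ (resp. $X$) in $B$ survives in $B\theta$. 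Hence every non-redundancy condition of $\mathcal{D}$ carries over, and $\Gamma'$ is the desired non-redundant type environment. The main obstacle, and the only place requiring care, is the bookkeeping of the $\alpha$-conversion convention that identifies the bound type variables of the fixed derivation with the $\mathcal{V}$-values of the corresponding subterm occurrences, so that one can be sure $\theta$ touches none of them and the (GEN$_X$) side conditions and all non-redundancy conditions are genuinely preserved.
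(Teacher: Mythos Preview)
Your proposal is correct and uses the same idea as the paper: collapse the extraneous free type variables via $\theta=[\star/\bar R]$ and observe that typability and non-redundancy are preserved because the variables that matter all occur in the associated multisystem and are therefore fixed by $\theta$. The paper argues this more tersely at the level of multisystem solutions (noting that $A s[\star/\bar R]=A[\star/\bar R]s$ since the variables of $\bar R$ do not occur in $s$, and that the occurrence matrix is unaffected) while you unfold the induction on the derivation; one small correction is that for a subterm $\Lambda X.N$ the variable $X$ is not literally a $\mathcal{V}$-value (that would be the fresh $W=\mathcal{V}(\Lambda X.N)$), but $X$ does appear in the multisystem equation $\alpha_1=\forall W.\alpha_2[W/X]$, so your conclusion $X\notin\bar V$ still stands.
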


\begin{proof}
	
	Notice that substitution sequences preserve equality. We use the following fact: let $\bar{R}$ be all the type variables in~$\Gamma$ which do not occur in the associated multisystem of~$M$. In particular the variables in~$\bar{R}$ do not occur among the variables in~$s$ in a reduction sequence. Consider any equation $ A s = B t$.  Then we have that $A s [ \star /\bar{R}]  = B t [\star / \bar{R}]$ and so  $A [ \star /\bar{R}] s = B  [\star / \bar{R}] t$. Thus if we apply $ [ \star /\bar{R}]$ to the types in~$\Gamma$ it is clear that we still obtain a typing and thus a solution to the associated multisystem. The typing is still $c$-non-redundant because we are not altering the variables belonging to the occurrence matrix of~$M$.
	\end{proof}

We call such a typing \emph{adequate}. Hence to look for~$c$-non-redundant typings for a typable term $M$ we need only look for adequate typings.  Hence to look for solutions to a multisystem we need only look for types with free variables occurring in the multisystem or equal to a distinguished type variable $\star$. We call such solutions \emph{adequate} solutions and the values attributed to the type scheme variables adequate values. The following obvious result is important:

\begin{lem}
	Given a term $M$ and adequate values for the terminals of its associated multisystem there are only a finite number of possible adequate values on the non-terminals.  
\end{lem}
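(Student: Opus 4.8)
The plan is to leverage Lemma~\ref{parsub}, which already determines the \emph{shape} of every non-terminal once the terminal values are fixed, and then to show that the residual freedom --- the partial substitution sequences applied to those terminals --- ranges over a finite set. So I would first fix adequate values $A_1,\ldots,A_n$ for the terminals. By Lemma~\ref{parsub} (and the remark following it), each non-terminal $\alpha s$, in particular each $\alpha$, then has a value built up from $A_1\mathfrak{s}_1,\ldots,A_n\mathfrak{s}_m$ according to a structure completely determined by the associated multisystem, with the number $m$ of holes bounded by the length of $M$. Since the number of non-terminals is bounded by the number of subterms and bound assumption variables of $M$, it suffices to prove that for each fixed terminal value $A_i$ there are only finitely many types of the form $A_i\mathfrak{s}$, where $\mathfrak{s}$ is a partial substitution sequence whose variables occur among the (finitely many) type variables appearing in the substitutions of the associated multisystem.

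The key step is to normalise $\mathfrak{s}$. Using the identities recorded just before Lemma~\ref{lemma ima2} --- namely $\sigma[X/X]=\sigma$, $\sigma[Y/X][W/X]=\sigma[Y/X]$, and commutation of substitutions on disjoint variables --- any such sequence can be rewritten as a single \emph{simultaneous} partial substitution, as is in fact already noted in the text. Now a simultaneous partial substitution applied to the fixed, finite type $A_i$ is specified by choosing, for each of the finitely many free occurrences of each type variable in $A_i$, either to leave it untouched or to replace it by one of the finitely many admissible target variables (those occurring in the associated multisystem together with $\star$, by adequacy). Hence there are only finitely many simultaneous partial substitutions that can be applied to $A_i$, and therefore only finitely many types $A_i\mathfrak{s}$.

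Putting the pieces together: each non-terminal has a fixed structure with a number of holes bounded by the length of $M$, and each hole is filled by one of finitely many types $A_i\mathfrak{s}_j$, so each non-terminal admits only finitely many adequate values; since there are finitely many non-terminals, the lemma follows. I expect the only delicate point to be the normalisation that collapses an a priori unbounded partial substitution sequence into a single simultaneous substitution on a fixed finite type; once that is granted, what remains is elementary counting over a fixed finite type and a fixed finite set of target variables.
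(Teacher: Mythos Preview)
Your proof is correct and follows the same route as the paper's one-line justification: invoke Lemma~\ref{parsub} and then observe that, since the partial substitution sequences involve only the finitely many type variables occurring in the associated multisystem (plus~$\star$, by adequacy), each $A_i\mathfrak{s}$ can take only finitely many values. You have simply spelled out in detail---via the reduction to a simultaneous partial substitution and the counting over free occurrences---what the paper compresses into a single sentence.
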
	

This follows from Lemma~\ref{parsub} noticing that the partial substitution sequences can involve only the  finite number of variables which occur in the associated multisystem.

Given a type $A$, its \emph{variable sequence} is the sequence, as they occur, of all free type variables in~$A$\footnote{For example for~$ A:= X \rightarrow (Y \rightarrow X)$ the variable sequence is~$X,Y,X$.  }. Note that given a solution to the associated multisystem of a term $M$ the values of terminals in the same group have the same structure.

\begin{lem}
	Consider an adequate  solution to the associated multisystem of a term $M$ which yields a~$c$-non-redundant typing. Then if we substitute the values of the terminals in a given group by types having a different structure with the same variable sequences we can obtain another solution which also yields a~$c$-non-redundant typing.
\end{lem}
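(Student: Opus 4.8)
The plan is to keep the values of all terminals outside the chosen group $\mathcal{G}$ unchanged, to use the prescribed new values on the terminals of $\mathcal{G}$, and to propagate these along the build-up operations dictated by the associated multisystem, exactly as in Lemma~\ref{parsub}. Concretely, by Lemma~\ref{parsub} and the Remark following it, a solution is specified by the values of the terminals together with, for each non-terminal, a sequence of partial substitutions applied to the terminals occurring in its decomposition; I would reuse the partial-substitution sequences of the given solution and apply them to the new terminal values, obtaining new values for all non-terminals and hence a candidate assignment $\mathcal{B}'$.

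The first thing to verify is that $\mathcal{B}'$ still satisfies every multiequation. The head multiequations (the arrow-headed and quantifier-headed ones, and the $F$-equations $\alpha=\forall W.\alpha^W$) hold by the very definition of the build-up operation and are insensitive to which structure sits at the terminal leaves, so the only delicate case is a body multiequation joining two variants $\alpha s$ and $\alpha' s'$ of terminals of $\mathcal{G}$ (produced by a $\mathrm{Join}$ step, possibly carrying the bookkeeping $\bigcirc$). Here the key observation is that a type is completely determined by the pair (its structure, its variable sequence), and that applying a substitution sequence --- total or partial, together with $\bigcirc$ --- to a type alters only its variable sequence, never its structure, and does so in a way depending only on that variable sequence. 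Hence if the old values $A,A'$ of the two terminals satisfied $A s = A' s'$ (so they shared a structure and their variable sequences agreed after $s,s'$), then the new values, which carry the common new structure and the \emph{same} variable sequences as $A$ and $A'$, again share a structure and have variable sequences agreeing after $s,s'$, so they remain equal; the analogous remark handles the way a non-terminal's build-up operation commutes with partial substitutions, exactly as in the proof of Lemma~\ref{parsub}.

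Next I would check that the constraints $\mathcal{K}(M)$ remain satisfied, which is precisely what certifies that the typing read off from $\mathcal{B}'$ is non-redundant. Every constraint is a Boolean combination of atoms $Occ(X,\alpha)$, and whether a type variable occurs free in a value depends only on that value's variable sequence (for a non-terminal, through the variable sequences of the terminals entering its decomposition, via the build-up operation and the occurrence axioms of Section~4.1). Since $\mathcal{B}'$ preserves the variable sequence of every terminal, it preserves the set of free variables of every non-terminal, hence the truth value of every atom $Occ(X,\alpha)$, hence every constraint in $\mathcal{K}(M)$; moreover $\mathcal{B}'$ stays adequate because, working with the $\circ/\bigcirc$ convention, the substitution variables remain disjoint from the variables occurring in the terminal values. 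Reading off $\Gamma'\vdash_{\Fat} M:A'$ from $\mathcal{B}'$ then yields the desired non-redundant typing.

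The main obstacle is the interaction between the (possibly altered) $\forall$-binders of the new terminal values and the partial substitutions pervading the multisystem: one has to be sure that changing a terminal's structure does not break the adequacy conditions under which the $\mathrm{Join}$ and $\mathrm{Quant}$ rules were applied, and that the slogan ``structure is irrelevant, the variable sequence is everything'' for substitutions is genuinely valid for the partial substitution sequences $\mathfrak{s}$ of Lemma~\ref{parsub}, not merely for the total substitutions $[X/Y]$. Isolating and proving that principle from the occurrence axioms is where the real work lies; once it is available, the rest is bookkeeping.
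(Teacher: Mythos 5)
Your proposal follows essentially the same route as the paper's proof: reuse the partial-substitution decomposition of Lemma~\ref{parsub} to propagate the new terminal values to the non-terminals, reduce each body multiequation $\alpha s = \alpha' s'$ to the observation that equality under (partial) substitution depends only on the variable sequences of the terminal values, and note that non-redundancy is preserved because the occurrence matrix is insensitive to the change of structure. The concern you flag at the end about bound variables and partial substitutions is addressed in the paper only by remarking that one works within a fixed irreducible multisystem with a fixed structure for the values, so your argument is, if anything, slightly more explicit on that point.
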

\begin{proof}
	By Lemma~\ref{parsub} the non-terminals are given by partial substitution sequences on terminals and since the variable sequences of the terminals are the same after altering the structure we can use the same partial substitution sequences to obtain the values of the non-terminals.  Consider $\alpha s$ and $\alpha' s'$ belonging to the same multiequation in the reduced multisystem and let $\alpha$ and $\alpha'$ have values $A$ and $A'$. Then $A s = A' s'$. The values $A$ and $A'$ are built up from partial substitutions on the values of the terminals in~$G(\alpha) = G(\alpha')$. Let these values be~$A_1,\ldots,A_n$ and $A'_1,\ldots, A'_n$.  So to check if~$A s= A' s'$ (after changing the structure on terminals keeping the variable sequence) we only need to check if~$A_i s = A'_i s'$ for~$i=1,\ldots,n$ with the changes above. We do not have to worry about renamed bound variables in~$A$ or~$A'$ because we are working within a fixed irreducible multisystem and thus with a fixed structure for~$A$ and $A'$.
	But observe that if~$B s = B' s'$ then $C s = C' s'$ where $C$ has the variable sequence as~$B$ and $C'$ the same variable sequence as~$B'$. It follows that when we alter the structure of the values of terminals in the above way that $\alpha s = \alpha' s'$ still holds for the altered values. 
	$c$-non-redundancy follows by observing that the alteration of the structure does not affect the occurrence matrix. 
\end{proof}

Hence we can now assume that in our adequate solutions the terminals are given by types of the form $X_1\rightarrow X_2\rightarrow\ldots\rightarrow X_n$. We call this the \emph{minimal form}. We refer to the position of~$X_i$ as the $i$th place. We call $n$ the \emph{size} of the type. By the result of removing the $i$th place from a type in minimal form (of a variable sequence of size greater than 1) we mean the type $X_1\rightarrow\ldots\rightarrow X_{i-1}\rightarrow X_{i+1}\rightarrow\ldots\rightarrow X_n$.

Given an adequate solution, consider the values of terminals in a group  in minimal form. They will have the same length $n$.  A place $k$ is called \emph{essential} if after we alter the solution in such a way that this place is removed from the values of all terminals in a group the occurrence matrix changes. The values of the other type scheme variables are given by the restrictions of their respective partial substitutions sequences on terminals. 
These values are a solution because if for two terminals  $\alpha$ and $\alpha'$ their values in minimal form satisfy  $\alpha s = \alpha ' s'$ then they continue to satisfy this equation if we remove a place from both values.

Notice that removing a place cannot alter a non-occurrence constraint.

\begin{lem} If a term $M$ has a~$c$-non-redundant typing then it has a~$c$-non-redundant typing given by an adequate solution in minimal form such that the maximum length of its values on the terminals is bounded by the order  $\mathcal{O}(mn)$, where  $n$ is the number of type variables and $m$ is the number of type scheme variables in the associated multisystem of~$M$.  
\end{lem}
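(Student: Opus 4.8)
The plan is to exhibit an explicit bound on the length of the terminal values in an adequate minimal-form solution, arguing that once this length exceeds the stated order, some place of each terminal group can be removed without changing the occurrence matrix, whence by the remark preceding the statement the removal yields another non-redundant typing with strictly shorter terminal values. Iterating this shrinking process terminates and lands in the required bound. So the proof is essentially an extremal/counting argument on places of terminals.

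First I would fix a term $M$ with a non-redundant typing, pass (by the earlier lemmas) to an \emph{adequate} solution in \emph{minimal form}, and collect the terminals into their groups $G_1,\ldots,G_r$. Within a group, all values have a common length $n_i$ and a common variable sequence, and the values of the non-terminals are obtained from the terminals by partial substitution sequences whose variables all lie among the $n$ type variables occurring in the associated multisystem (this is exactly Lemma \ref{parsub} together with the adequacy lemmas). The key observation to make precise is: a place $k$ in a group $G_i$ is \emph{essential} only if removing it changes the occurrence matrix, and the occurrence matrix records, for each of the $n$ type variables $X$ and each of the $m$ type scheme variables $\alpha$, whether $Occ(X,\alpha)$ holds. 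Removing a non-essential place is harmless: it still gives a solution (since $\alpha s = \alpha' s'$ is preserved under removing a common place, and non-occurrence constraints cannot be created by removal, as noted just above the statement), it remains adequate, and it remains non-redundant precisely because the occurrence matrix is unchanged and $\mathcal{K}(M)$ is phrased entirely in terms of that matrix.

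Next I would count essential places. Each essential place $k$ of a group $G_i$ must be "witnessed" by at least one atomic occurrence fact that would flip if $k$ were removed — that is, there is a pair $(X,\alpha)$ such that the truth value of $Occ(X,\alpha)$ genuinely depends on the presence of place $k$ in $G_i$. Since the value of $\alpha$ is built by partial substitutions from the terminals, the variable sitting at place $k$ of a terminal can influence at most a controlled number of such pairs; conversely, a single pair $(X,\alpha)$ can be the witness for only boundedly many places across all groups, because $\alpha$ has a fixed structure with a bounded number of terminal-leaves. Tallying: there are $n$ choices of $X$ and $m$ choices of $\alpha$, so at most $\mathcal{O}(mn)$ essential places can exist in total; in particular the total length $\sum_i n_i$ of all terminal values — and hence the maximum length — is $\mathcal{O}(mn)$ once all non-essential places have been removed. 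Finally I would note the shrinking procedure terminates (the multiset of terminal lengths strictly decreases under each removal), so we may assume at the outset that every place is essential, giving the bound.

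The main obstacle I anticipate is the double-counting step: making rigorous the claim that "place $k$ of group $G_i$ is essential $\Rightarrow$ it is charged to a distinct, or at least boundedly-reused, pair $(X,\alpha)$ in the occurrence matrix." One has to trace how a variable at a fixed place of a terminal propagates, through the partial substitution sequences $\mathfrak{s}_1,\ldots,\mathfrak{s}_m$ furnished by Lemma \ref{parsub}, into the free-variable sets of the non-terminals, and show both that removing that place changes only occurrence facts about scheme variables in $G(\alpha)$-reachable positions and that no pair gets over-charged by more than a constant (bounded by the length of $M$, which is in turn absorbed into $m$ and $n$). Handling partial — as opposed to total — substitutions carefully here, so that "the variable at place $k$" is well-defined after renamings and the adequacy conventions on $\{X/Y\}$, is where the bookkeeping is genuinely delicate; everything else is routine once that charging scheme is set up.
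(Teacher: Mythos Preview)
Your overall strategy coincides with the paper's: pass to an adequate minimal-form solution, iteratively delete non-essential places (which preserves being a solution, adequacy, the occurrence matrix, and hence non-redundancy), and then bound the number of essential places by charging each one to a pair $(X,\alpha)$ in the occurrence matrix.

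Where you diverge is exactly at the step you flag as the ``main obstacle.'' You hedge that a pair $(X,\alpha)$ might witness \emph{boundedly many} places and worry about over-charging absorbed into a constant depending on the length of $M$. The paper is sharper and simpler here: within a fixed group, the witness sets are \emph{disjoint}. Concretely, for each place $k$ in a group let $S(k)$ be the set of pairs $(X,\alpha)$ whose occurrence status flips upon removing place $k$. The paper's observation is that removing two different places $k\neq k'$ in the same group cannot affect $Occ(X,\alpha)$ for the same $(X,\alpha)$. The reason is immediate once stated: if removing $k$ alone flips $Occ(X,\alpha)$ from true to false, then every free occurrence of $X$ in the value of $\alpha$ must trace back (through the partial substitution sequences of Lemma~\ref{parsub}) to place $k$ of some terminal in the group; if removing $k'$ alone also flipped it, every such occurrence would trace back to place $k'$; both cannot hold for $k\neq k'$, since an occurrence surviving from $k'$ would still be present after deleting $k$. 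Hence the $S(k)$ are pairwise disjoint, each essential place contributes at least one pair, and there are at most $mn$ pairs available, so the number of essential places in a group is at most $mn$. No constant depending on $M$ is needed, and the ``delicate bookkeeping'' you anticipate with partial substitutions does not arise: one only needs that an occurrence of $X$ in $\alpha$ originates from \emph{some} place, not a precise tracking of which renamings occurred along the way.

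In short: your proof sketch is on the right track and matches the paper's line, but the counting step is not an obstacle --- it is a one-line disjointness argument, and seeing that cleans up the whole proof.
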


\begin{proof}
	Consider an adequate minimal solution for the associated multisystem of~$M$.
	Given a group of terminals assume that a place  is not essential. Then if we remove it we obtain a solution which gives a~$c$-non-redundant typing. Hence we only need to study the number of essential places on values of terminals. Observe that \emph{removing two places in a given group cannot affect the occurrence condition  for the same type variable and the same type scheme}.
	For each place $n$ consider the set $S(n)$ of affected pairs $(X,\alpha)$. Then the sets $S(n)$ are
	disjoint for different $n$.  It follows that the number of essential places in a group of non-terminals cannot exceed the product of the number of type variables and the number of type scheme variables in the associated multisystem.
	\end{proof}

The bound in the lemma can in fact be sharpened to the order of the cardinality of~$\mathcal{K}_c(M)$,  for in the proof above we are in fact only interested in not affecting certain pairs $(X,\alpha)$ corresponding to~$\mathcal{K}_c(M)$.

As a corollary we get:

\begin{thm}
	Given a typable term $M$ in~$\Fat$ it is decidable whether it has a~$c$-non-redundant typing.
\end{thm}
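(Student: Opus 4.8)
The plan is to turn the bounding lemma just proved into a terminating search. Given a typable term $M$, fix a tagging $\mathcal{A}$ and a variable tagging $\mathcal{V}$, form the associated multisystem, and reduce it to an irreducible multisystem; this exists by Remark~\ref{reductionFb}, because $M$ is typable. Identify the terminal type scheme variables and partition them into groups. By the previous lemma, if $M$ admits any non-redundant typing then it admits one coming from an \emph{adequate} solution in minimal form in which each terminal is assigned a value $X_1 \to \cdots \to X_k$ with $k = \mathcal{O}(mn)$, where $m$ and $n$ are the numbers of type scheme variables and type variables in the associated multisystem. Since the free type variables of an adequate solution lie in the finite set of variables occurring in the multisystem together with the auxiliary variable $\star$, and terminals in the same group must receive values with a common variable sequence, there are only finitely many candidate assignments to the terminals.

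For each such candidate I would then enumerate the induced adequate values on the non-terminals: the irreducible multisystem already fixes the \emph{structure} of each non-terminal, so by Lemma~\ref{parsub} and the subsequent finiteness lemma it remains only to choose, among the finitely many partial substitution sequences on variables of the multisystem, the ones applied to the terminals occurring in that structure. For each resulting assignment of concrete types to all type scheme variables one checks that every multiequation of the irreducible multisystem holds --- a decidable comparison of concrete types --- and that the occurrence matrix it determines satisfies every formula of $\mathcal{K}(M)$, which is decidable since $Occ(X,\sigma)$ reduces to testing free occurrence of $X$ in a concrete type. The procedure answers ``yes'' iff some candidate passes both checks; since all quantities involved are finite and all checks effective, this is an algorithm.

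Correctness splits into two directions. The ``no'' direction is immediate from Lemma~\ref{lemma ima2} together with the bounding lemma: a non-redundant typing yields a solution satisfying $\mathcal{K}(M)$, and it can be normalised to an adequate minimal-form solution lying inside the search space. The ``yes'' direction needs the converse of Lemma~\ref{lemma ima2}, i.e. the analogue --- now with the constraints $\mathcal{K}(M)$ present --- of the ``if'' direction of Lemma~\ref{resol}: from an adequate solution satisfying $\mathcal{K}(M)$ one reads off $\Gamma$ from the tags of the free assumption variables and $A$ from the tag of $M$, and verifies by induction on $M$ that this is a derivation of $\Fat$ --- the (APP) and (ABS) steps from the arrow multiequations, and for subterms $NX$ and $\Lambda X.N$ using the multiequations $\alpha_1 = \alpha_2^W[X/W]$, $\alpha_2 = \forall W.\alpha_2^W$ and $\alpha_1 = \forall W.\alpha_2[W/X]$ together with adequacy of the chosen types for the relevant substitutions, so as to recover the (INST) and (GEN$_X$) rules; non-redundancy of the reconstructed typing is exactly what the conjuncts $Occ(X,\mathcal{A}(S))$ and $Occ(W,\alpha^W)$ of $\mathcal{K}(M)$ enforce. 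The main obstacle I expect is precisely this converse: the bookkeeping of partial substitutions, adequacy, and $\alpha$-renaming of bound type variables needed to translate a multisystem solution faithfully back into a genuine $\Fat$ derivation --- and to see that the occurrence constraints transfer correctly --- is where the real work lies, the finiteness and effectiveness of the search itself being routine once the bounding lemma is in hand.
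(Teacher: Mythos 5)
Your proposal follows essentially the same route as the paper, which derives the theorem directly as a corollary of the $\mathcal{O}(mn)$ bounding lemma together with the preceding finiteness results on adequate minimal-form solutions: enumerate the finitely many candidate terminal values and induced non-terminal values, then check the multiequations and the constraints $\mathcal{K}(M)$. Your observation that the ``yes'' direction requires a converse of Lemma~\ref{lemma ima2} (analogous to the ``if'' direction of Lemma~\ref{resol}) is apt --- the paper leaves that step implicit --- but your sketch of how to reconstruct the derivation and read off non-redundancy from the $Occ$ constraints is the intended argument.
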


Given a term which admits a~$c$-non-redundant typing, the proof of the theorem above furnishes us with a method for type inference yielding an adequate, minimal $c$-non-redundant typing. In particular it furnishes us a method of type inference for typable terms.

 Note that when $c$ chooses all the polymorphic subterms of a term $M$ then $c$-non-redundancy coincides with non-redundancy (cf. Definition 4.1).

In the following subsection we give concrete examples on how we check for non-redundant typability and how we can find minimal typings whose values on terminals have only essential places.

\subsection{Examples}

In order to give some practical examples we consider a multi-sorted first-order language with equality endowed with a single binary predicate
$Occ$ and two sorts $TypeVariable$ and $TypeScheme$. $Occ$ has signature $TypeVariable \times TypeScheme$.
The occurrence theory $\mathcal{C}$ includes, besides the standard axioms for equality, the following \emph{occurrence axioms}:

\[Occ(X,\sigma\rightarrow\tau) \leftrightarrow Occ(X,\sigma)\vee Occ(X, \tau)\]

\[Occ(Y,\forall X. \sigma) \leftrightarrow Occ(Y, \sigma) \text{$\enspace$ for~$Y\neq X$}\]

\[\neg Occ(X,\forall X. \sigma)\]
\[\neg Occ(X, \sigma[Y/X]) \text{$\enspace$ for~$X\neq Y$}\]
\[Occ(X,\sigma) \leftrightarrow Occ(X, \sigma [Z /Y]) \text {$\enspace$ for~$X\neq Y$ and $Z\neq X$ } \]

\[Occ(X,\sigma) \rightarrow Occ(Y, \sigma [Y / X]  )    \]

\[ \sigma [Y/X] = \sigma [Z/X] \wedge  Occ(X,\sigma)\rightarrow Y = Z \]


If type scheme variables are interpreted as types and we assume that for all the resulting expression $A[Y/X]$, $Y$ is free for~$X$ in~$A$,   then these axioms are clearly sound for the ordinary notion of free occurrence of a variable in a type.

Consider the example of the beginning of this  section: a term $M$ containing subterms of the form $N(xX)$ and $N(xY)$.

Let $ x:\alpha_3$, $xX:\alpha_2$, $xY:\alpha_1$. Put $X_1 = \mathcal{V}(xX)$ and $X_2 = \mathcal{V}(xY)$.
The associated multisystem, after applying some rules, will include  $\alpha_1 = \alpha_2$. It will also include
\begin{align*}
  \alpha_3 & = \forall X_1 \alpha_3^{X_1}  \\
  \alpha_3 & = \forall X_2 \alpha_3^{X_2} \\
  \alpha_2 & = \alpha_3^{X_1}[X/X_1]\\
  \alpha_1 & = \alpha_3^{X_2}[Y/X_2]
\end{align*}
Also $Occ(X_1 ,  \alpha_3^{X_1}  ) $ and  $Occ(X_2 ,  \alpha_3^{X_2})$.

Applying (Quant) to the first two equations yields:
\begin{align*} \alpha_3^{X_1} & = \alpha_3^{X_2}[X_1/X_2]\\
  \alpha_3^{X_2} & = \alpha_3^{X_1}[X_2/X_1]
\end{align*}
Hence
\begin{align*}
  \alpha_2 & = \alpha_3^{X_2}[X_1/X_2][X/X_1] \\
  \alpha_1 & = \alpha_3^{X_2}[Y/X_2]
\end{align*}
Since $X_1$ does not occur in~$\alpha_3^{X_2}$ we get
\begin{align*}
  \alpha_2 & = \alpha_3^{X_2}[X/X_2]\\
  \alpha_1 & = \alpha_3^{X_2}[Y/X_2]
\end{align*}
Hence, since $\alpha_2 = \alpha_1$ by the last occurrence axiom we get that $X =Y$, a contradiction. \\

Consider the term $(y(xX))(xY)$.
Set $x:\alpha_1$,  $xX:\alpha_2$, $xY:\alpha_3$,  $y(xX):\alpha_4$, $(y(xX))(xY):\alpha_5$,  $y:\alpha_6$ and
$X_1 = \mathcal{V}(xX)$, $X_2 = \mathcal{V}(xY)$.
The associated multisystem contains:
\begin{align*}
  \alpha_6  & =  \alpha_2 \to \alpha_4 \\
\alpha_4  & =  \alpha_3\to \alpha_5  \\
 \alpha_1 & = \forall X_1. \alpha_1^{X_1} \\
 \alpha_1 & = \forall X_2. \alpha_1^{X_2} \\
 \alpha_2 & = \alpha_1^{X_1} [X/ X_1 ] \\
\alpha_3 & = \alpha_1^{X_2} [Y/ X_2 ] \end{align*}
\noindent and also $Occ(X_1, \alpha_1^{X_1})$, $Occ(X_2, \alpha_1^{X_2})$.

Applying (Quant) to the third and fourth equations yields:
\begin{align*}
  \alpha_1^{X_1} & = \alpha_1^{X_2}[X_1/X_2] \\
\alpha_1^{X_2} & = \alpha_1^{X_1}[X_2/X_1] \end{align*}
Hence $\neg Occ(X_2, \alpha_1^{X_1})$.

Let us put $ \alpha_1^{X_1} = X_1$. Then $\alpha_1 = \forall X_1. X_1$ (so $x: \forall X_1.X_1$), $\alpha_2 = X, \alpha_3 = Y$ and $\alpha_6 = X \rightarrow Y \rightarrow \alpha_5$ (so
$y:  X \rightarrow Y \rightarrow \alpha_5$).\\

Consider now $(xX)Y$. 

Let $x:\alpha_1, xX:\alpha_2, (xX)Y:\alpha_3$, $X_1 = \mathcal{V}(xX), X_2 = \mathcal{V}((xX)Y)$.

The associated multisystem contains:
\begin{align*}
  \alpha_3 & = \alpha_2^{X_2}[Y/X_2]\\
  \alpha_2 & = \alpha_1^{X_1}[X/X_1]
\end{align*}
\noindent with $Occ(X_2, \alpha_2^{X_2})$ and $Occ(X_1, \alpha_1^{X_1})$.  
Since the associated multisystem contains:
\begin{align*}
  \alpha_1 & = \forall X_1. \alpha_1^{X_1}\\
\alpha_2 & = \forall X_2. \alpha_2^{X_2} \end{align*}
\noindent we get 
\[\alpha_1^{X_1}[X/X_1] = \forall X_2. \alpha_2^{X_2} \]
Occurrence axioms yield $Occ(X,\alpha_1^{X_1}[X/X_1] )$ and thus
$Occ(X, \alpha_2^{X_2})$. We also have $Occ(X_2, \alpha_2^{X_2})$.
Also it is clear that  $\neg Occ(X_1, \alpha_2^{X_2})$. Take $
\alpha_2^{X_2} = X_2 \rightarrow X$.
Since \[\alpha_1^{X_1}[X/X_1] = \forall X_2. X_2 \rightarrow X \]
and $Occ(X_1, \alpha_1^{X_1})$ we must have $\alpha_1^{X_1} = \forall X_2.X_2\rightarrow X_1$ and hence
\[x: \forall X_1. \forall X_2. X_2 \rightarrow X_1\]
\noindent is a non-redundant typing.\\

Another example  is~$\Lambda X.(xY)$. 

Put $x:\alpha_1, xY:\alpha_2, \Lambda X.(xY):\alpha_3$ and $X_1 = \mathcal{V}(xY)$.
Then $\alpha_3 = \forall X. \alpha_2$, $\alpha_2 = \alpha_1^{X_1}[Y/X_1]$ and $\alpha_1 = \forall X_1. \alpha_1^{X_1}$ with 
$Occ(X_1, \alpha_1^{X_1})$, $Occ(X, \alpha_2)$ and $\neg Occ(X, \alpha_1)$.

Substitution yields $Occ(X,\alpha_1^{X_1}[Y/X_1] )$ and $\neg Occ(X,\forall X_1. \alpha_1^{X_1}  )$ and the occurrence axioms imply the contradiction
$Occ(X,\alpha_1^{X_1})$ and $\neg Occ(X,\alpha_1^{X_1})$.\\

A more interesting example: $(\Lambda X.\lambda x.x)Y$. In this case we get the non-redundant typing $x:X$ as can easily be checked.

\section{Final Remarks}

\begin{rem} 
	
A procedure for type inference will take as input a typable term and generate a typing  (or class of typings which is general enough in some sense).	
By the results of Section 4 we see that the output of such a procedure will
involve not only specifying occurrence (or non-occurrence) constraints for variables and terminal type scheme variables but also will involve a bound on the number of occurrences that may be required to obtain a valid typing.

	For Lemma 4.14 in the case in which $c$ does not choose any polymorphic subterms,
	that is, $\mathcal{K}_c(M) = \mathcal{B}(M)$ does not contain any positive constraints,  we obtain a method of type inference for  $\Fat$ in Polymorphic Curry style.
	
 Further work will involve a detailed analysis of the complexity of the procedure involved, preferably in the context of an actual implementation.
	
\end{rem}

\begin{rem}

In a recent work~\cite{pist} Pistone and Tranchini  proved that typability is decidable for Curry style $\Fat$. 
They do this by means of an alternative approach  which involves reducing typability to the type checking problem (TC).
To show that TC is decidable, they employ a decidable restriction, modeled after $\Fat$,  of the general undecidable second-order unification problem. The restricted problem is shown to be decidable by  reduction to the first-order case  where the problem of finding variable cycles is decidable.
 Type checking is reduced to such a restricted second-order unification problem.
 Our present work on the other hand analyzes typability not only for Curry style terms but also for Polymorphic Curry style terms and most of the paper is focused on obtaining a procedure for type inference embodying possible constraints.

The question we address in this paper of finding non-redundant typings for Polymorphic Curry terms can  be posed also for Curry terms. Note that the question of the typability of a Curry term $M$ can be seen as involving the analysis of possible Polymorphic Curry terms $N$ such that  $[N] = M$.  An assumption variable $x$ occurring in a Curry term $M$ is called \emph{non-applying} if it does not occur in a subterm of the form $xy$. We call a Polymorphic Curry term $N$ such that $[N] =M$   \emph{non-trivial} for~$M$ if~$N$ is not equal to~$M$ or obtained from $M$ by replacing non-applying assumption variables $x$ in~$M$ by~$xX$ for some type variable $X$. A typing for such an~$N$ is called a \emph{non-trivial} typing of~$M$.
For the Curry system the  following important questions can be asked:

\begin{enumerate}
\item Given a Curry term $M$ is it decidable whether $M$ has a non-trivial non-redundant typing?\footnote{This is non trivial. For example given a Curry term $M$, one cannot simply consider $\Lambda X.M$ for a certain $X$ as the example $(\lambda x. xy)z$ shows.}

\item  In searching for a non-trivial non-redundant typing for~$M$  do we  need to consider only a finite number (bounded by the complexity of~$M$) of Polymorphic Curry terms $N$ such that $[N] = M$?
\item for~$M$ having a non-trivial non-redundant typing is there a procedure to generate such  typings?
\end{enumerate}

It seems plausible that question (2) can be answered in the affirmative, specially if we consider arguments based on strong normalisation as in Remark 5.3.
In this case it is readily seen that the results of the present work allow us to answer (1) in the affirmative and that we obtain an effective procedure for question (3).

We also observe that in an implementation of~$\Fat$  in  actual programming languages the questions above are relevant. Redundant typing can be seen as problematic in terms of being computationally insignificant and a waste of resources.

\end{rem}

\begin{rem}

We mentioned in the above remark that TC for~$\Fat$ in the Curry system  has been shown to be decidable in~\cite{pist} using a second-order unification technique.  It seems there could be an alternative approach (perhaps computationally more efficient) based on the Polymorphic Curry style and the fact that $\Fat$ has strong normalisation. Given an environment $\Gamma$, a type $A$ and a Curry term $M$, the above problem is equivalent to finding a Polymorphic Curry term $N$ such that $[N]=M$ and $\Gamma\vdash N: A$.

The number of instantiations in a derivation of~$N$  is clearly bounded by the number of universal abstractions in the derivation and the total number of universal abstractions in~$\Gamma$,  so we
only need to study the bounds of these. But what about series of repeated applications of INST and GEN on the same variable?
It would seem that we need normalisation arguments like~\cite{undec} [Lemma 3.2 or Lemma 3.4]. Also

\begin{lem}
	Suppose $\Gamma\vdash_{\Fat} M:A$ where $M$ is a Polymorphic Curry term and let $M$ contain a universal abstraction on
	$X$, with $X$ not occurring in~$\Gamma$ or in~$A$. Then there is a term $M'$ such that $[M] = [M']$, $\Gamma\vdash_{\Fat} M':A$
	and $M'$ has one less occurrence of universal abstractions on~$X$.
\end{lem}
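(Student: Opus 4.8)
The plan is to fix a derivation $\mathcal{D}$ of $\Gamma\vdash_{\Fat} M:A$ and to perform on $M$ and on $\mathcal{D}$ a type-level surgery that removes one universal abstraction on $X$ while leaving $\Gamma$, the type $A$ and the erasure $[\cdot]$ unchanged. Single out an occurrence of a subterm $\Lambda X.N$ of $M$ that is not properly contained in any other subterm of the form $\Lambda X.(\cdot)$. In $\mathcal{D}$ this subterm is introduced by $(\textup{GEN}_X)$, from a node $\Gamma'\vdash_{\Fat} N:B$ with $X\notin FTV(\Gamma')$, so at that point its type is $\forall X.B$; since $X$ does not occur in $A$, this occurrence is not the root of $M$, hence has a parent in the term tree, and I would argue by cases on that parent.

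First I would dispatch the case in which some universal abstraction on $X$ in $M$ is immediately instantiated, i.e.\ $M$ contains a subterm $(\Lambda X.N_0)Z$ --- exactly the ``$(\textup{INST})$ right after $(\textup{GEN}_X)$ on the same variable'' pattern noted above. There I would contract this type redex, replacing $(\Lambda X.N_0)Z$ by $N_0[Z/X]$ (renaming bound type variables of $N_0$ if necessary to avoid capture) to obtain $M'$. Since $Z$ is a type variable, $N_0[Z/X]$ is again a term of $\Fat$; the substitution lemma for $\Fat$, which is the $\Fat$-analogue of \cite{undec}[Lemma 3.2], gives $\Gamma\vdash_{\Fat}M':A$; moreover $[(\Lambda X.N_0)Z]=[N_0]=[N_0[Z/X]]$, so $[M']=[M]$; and since substituting a type variable for a type variable creates no new occurrence of $\Lambda X$, exactly the contracted abstraction has vanished. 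In the complementary case no universal abstraction on $X$ in $M$ is followed by a type application; then the $\forall X$ produced at the chosen outermost $\Lambda X.N$ is not eliminated by any instantiation in $\mathcal{D}$, and its bound name $X$ plays no role visible in $\Gamma$ or in $A$ (neither of which mentions $X$). I would then simply $\alpha$-rename this abstraction to a fresh type variable $X'\neq X$, updating the types of $\mathcal{D}$ accordingly, obtaining an $\alpha$-equivalent $M'$ with $\Gamma\vdash_{\Fat}M':A$, $[M']=[M]$ and one fewer occurrence of $\Lambda X$. When in addition $X\notin FTV(B)$ one may instead delete the redundant quantifier $\forall X$ altogether and propagate the change downstream, as in the redundant-quantifier elimination of \cite{undec}[Lemma 3.4] (which carries over from $\F$ to $\Fat$).

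The hard part will be the bookkeeping in this second case: checking that the renaming (or, in the redundant subcase, the deletion of $\forall X$) can be carried out coherently throughout $\mathcal{D}$ --- breaking no $(\textup{GEN})$-proviso and leaving no instantiation elsewhere ill-typed --- and it is precisely here that the assumption that $X$ is foreign to $\Gamma$ and to $A$ is used. A clean way to organise the whole proof is to first bring $\mathcal{D}$ to a form in which every universal abstraction on $X$ either occurs inside a redex $(\Lambda X.N)Z$ or is vacuous, thereby reducing to the two cases above; establishing this normal form is the step that really requires a normalisation-style argument in the spirit of \cite{undec}[Lemmas 3.2 and 3.4].
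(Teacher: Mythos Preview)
The paper does not actually prove this lemma: immediately after stating it, the paper only remarks that when $\Gamma$ and $A$ are canonical (no redundant quantifiers) the statement follows from \cite{undec}, Lemma~3.11, and otherwise leaves it as part of the heuristic sketch of future work in Remark~5.3. Your proposal therefore goes well beyond what the paper offers, and the comparison is between your argument and essentially a one-line pointer.

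Your Case~1 (contracting a type redex $(\Lambda X.N_0)Z$) is correct and is in the spirit of the normalisation the paper alludes to. Your Case~2, however, is too cheap to be the intended argument: $\alpha$-renaming $\Lambda X.N$ to $\Lambda X'.N[X'/X]$ for a fresh $X'$ always yields an $\alpha$-equivalent term with the same typing, the same erasure, and one fewer \emph{literal} occurrence of $\Lambda X$ --- entirely independently of whether $X$ occurs in $\Gamma$ or in $A$. Read this way the lemma is trivially true (Case~1 becomes superfluous) and the hypothesis on $X$ is idle. But then the lemma cannot do the job the paper wants it for, namely bounding the total number of type abstractions one must consider when searching for a Polymorphic Curry lift $N$ with $[N]=M$: swapping $\Lambda X$ for $\Lambda X'$ does not shrink that search space at all. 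What the paper, via \cite{undec} Lemma~3.11, is really after is the \emph{elimination} of a universal abstraction, not its renaming. Your own closing remark --- that a normalisation-style argument is needed to bring every $\Lambda X$ either into a redex position or into a vacuous position where the quantifier can be deleted --- pinpoints exactly the missing substance. So your sketch is formally valid for the lemma as literally stated, but the nontrivial content (which the paper itself only gestures at, and only in the canonical case) lies entirely in the step you explicitly defer. A minor side issue: your claim that the chosen $\Lambda X.N$ cannot be the root of $M$ because $X$ does not occur in $A$ is not sound if ``occur'' means ``occur free'', since $A=\forall X.B$ has $X\notin FTV(A)$; this does not affect the renaming argument, but it is worth tightening.
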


In the case in which $\Gamma$ and $A$ are canonical (in the terminology of~\cite{undec}), that is, have no redundant quantifiers,
this  follows from~\cite{undec} Lemma 3.11.

It seems plausible that using this lemma and normalisation ($\Fat$ has strong normalisation~\cite{FerreiraFerreira13}) we can show that
there are only a finite number of possible terms $N$ with $[N] = M$ which we need to consider to decide if~$\Gamma \vdash N:A$.  Of course if~$M$ is not normal then there can be no normal $N$ such that $[N] = M$. However we could restrict reduction  to polymorphic redexes only. In the derivation of such reduced terms a INST can never come after a GEN.\@ Also by canonicity sequences of GENs must be of length
bounded by the structure  of~$\Gamma$. Sequences of INST are bounded by the number of previous GENs. Hence the number of combinations of INST and GEN is finite and bounded by the structure of~$\Gamma$ and $A$ and thus  we need only consider a finite number of terms $N$ such that $[N] = M$ and the problem above is very likely decidable. We intend to proceed this line of research in future work. \end{rem}

\section*{Acknowledgments}

The authors are grateful to the anonymous referees for useful comments and suggestions regarding the preliminary versions of this paper.  In particular they acknowledge the idea of working with $c$-non-redundancy.

\bibliography{bibrefs}

\begin{thebibliography}{GLT89}

\bibitem[DR19]{DR}
A.~Dudenhefner and J.~Rehof.
\newblock A simpler undecidability proof for system {F} inhabitation.
\newblock {\em TYPES 2018, Leibniz International Proceedings in Informatics
  (LIPIcs)}, 130, 2019.

\bibitem[Fer06]{Ferreira06}
F.~Ferreira.
\newblock Comments on predicative logic.
\newblock {\em Journal of Philosophical Logic}, 35:1--8, 2006.

\bibitem[Fer17a]{Fer17}
G.~Ferreira.
\newblock Eta-conversions of {IPC} implemented in atomic {F}.
\newblock {\em Logic Journal of the IGPL}, 25(2):115--130, 2017.

\bibitem[Fer17b]{Rasiowa}
G.~Ferreira.
\newblock Rasiowa–{H}arrop disjunction property.
\newblock {\em Studia Logica}, 105:649--664, 2017.

\bibitem[FF13]{FerreiraFerreira13}
F.~Ferreira and G.~Ferreira.
\newblock Atomic polymorphism.
\newblock {\em The Journal of Symbolic Logic}, 78(1):260--274, 2013.

\bibitem[Gir71]{gir1}
J.-Y. Girard.
\newblock Une extension de {l'interprétation} de {G}ödel {à} {l'analyse}, et
  son application a {l'elimination} des coupures dans {l'analyse} et la theorie
  des types.
\newblock In {\em Proceedings of the second scandinavian logic symposium (J. E.
  Fenstad, editor)}, pages 63--92. North Holland, 1971.

\bibitem[GLT89]{gir}
J-Y. Girard, Y.~Lafont, and P.~Taylor.
\newblock {\em Proofs and Types}.
\newblock Cambridge University Press, 1989.

\bibitem[L{ö}76]{lob}
M.~H. L{ö}b.
\newblock Embedding first order predicate logic in fragments of intuitionistic
  logic.
\newblock {\em The Journal of Symbolic Logic}, 41(4):705--718, 1976.

\bibitem[Pro21]{CP}
M.~Clarence Protin.
\newblock Type inhabitation of atomic polymorphism is undecidable.
\newblock {\em Journal of Logic and Computation}, 31(2):416--425, 2021.

\bibitem[PT21]{pist}
Paolo Pistone and Luca Tranchini.
\newblock {What’s Decidable About (Atomic) Polymorphism?}
\newblock In Naoki Kobayashi, editor, {\em 6th International Conference on
  Formal Structures for Computation and Deduction (FSCD 2021)}, volume 195 of
  {\em Leibniz International Proceedings in Informatics (LIPIcs)}, pages
  27:1--27:23, Dagstuhl, Germany, 2021. Schloss Dagstuhl -- Leibniz-Zentrum
  f{\"u}r Informatik.
\newblock URL: \url{https://drops.dagstuhl.de/opus/volltexte/2021/14265}, \href
  {https://doi.org/10.4230/LIPIcs.FSCD.2021.27}
  {\path{doi:10.4230/LIPIcs.FSCD.2021.27}}.

\bibitem[PTP22]{Remarks}
P.~Pistone, L.~Tranchini, and M.~Petrolo.
\newblock The naturality of natural deduction ({II}). {O}n atomic polymorphism
  and generalized propositional connectives.
\newblock {\em Studia Logica}, 110:545--592, 2022.

\bibitem[R{\'{e}}m02]{Remy!appsem}
D.~R{\'{e}}my.
\newblock {U}sing, {U}nderstanding, and {U}nraveling the {OC}aml {L}anguage.
\newblock In Gilles Barthe, editor, {\em {A}pplied {S}emantics. Advanced
  Lectures. LNCS 2395.}, pages 413--537. Springer Verlag, 2002.

\bibitem[Rey74]{Rey}
J.~C. Reynolds.
\newblock Towards a theory of type structure.
\newblock In {\em Colloque sur la programmation (B. Robinet, editor)},
  volume~19 of {\em Lecture Notes in Computer Science}. Springer Verlag, 1974.

\bibitem[SF20]{ESF20}
J.~Esp{í}rito Santo and G.~Ferreira.
\newblock A refined interpretation of intuitionistic logic by means of atomic
  polymorphism.
\newblock {\em Studia Logica}, 108:477--507, 2020.

\bibitem[TPP19]{TPP19}
L.~Tranchini, P.~Pistone, and M.~Petrolo.
\newblock The naturality of natural deduction.
\newblock {\em Studia Logica}, 107:195--231, 2019.

\bibitem[Wel99]{undec}
J.~B. Wells.
\newblock Typability and type checking in {S}ystem {F} are equivalent and
  undecidable.
\newblock {\em Annals of Pure and Applied Logic}, 98(1-3):111--156, 1999.

\end{thebibliography}
\bibliographystyle{alphaurl}

\end{document}